\newcommand{\bk}{\bm{k}}
\newcommand{\bq}{\bm{q}}
\newcommand{\bU}{\bm{U}}
\newcommand{\lee}{{\{8{,}8\}}}
\newcommand{\HTG}{\Gamma} 
\newcommand{\GM}[1]{{\mathbbm{\Gamma}}_{#1}} 
\newcommand{\htgel}{\eta} 
\newcommand{\HTGuc}{\HTG^{(1)}} 
\newcommand{\HTGsc}{\HTG^{(2)}}
\newcommand{\HTGnc}{\HTG^{(n)}}
\newcommand{\htg}[1]{\gamma_{#1}} 
\newcommand{\HPG}{G} 
\newcommand{\hpgel}[1]{g} 
\newcommand{\nsg}{\tilde{\Gamma}} 
\newcommand{\TG}{\Delta}                    
\newcommand{\gpres}[2]{\left.\left\langle #1\vphantom{#2} \right| #2 \right\rangle} 
\newcommand{\genus}{\mathfrak{g}}           
\newcommand{\tgquot}[2]{\mathrm{T}#1.#2}
\newcommand{\nsubg}{\vartriangleleft}
\newcommand{\gbusn}{\vartriangleright}       
\newtheorem{theorem}{Theorem}[section]
\newtheorem{lemma}[theorem]{Lemma}
\DeclareMathOperator{\lcm}{lcm}
\newcommand{\nn}{\nonumber}
\DeclareSymbolFontAlphabet{\mathbbm}{bbold}
\def\maketitle{
\@author@finish
\title@column\titleblock@produce
\suppressfloats[t]}
\def\bibsection{%
   \par
   \begingroup
    \baselineskip26\p@
    \bib@device{\hsize}{72\p@}%
   \endgroup
   \nobreak\@nobreaktrue
   \addvspace{19\p@}%
  }%
\def\clearfmfn{\let\@FMN@list\@empty}   
\begin{document}

\title{Hyperbolic Non-Abelian Semimetal}

\author{Tarun Tummuru\,\orcidlink{0009-0001-7196-2763}}
\thanks{These authors contributed equally to this work.}
\affiliation{Department of Physics, University of Zurich, Winterthurerstrasse 190, 8057 Zurich, Switzerland}

\author{Anffany Chen\,\orcidlink{0000-0002-0926-5801}}
\thanks{These authors contributed equally to this work.}
\affiliation{Department of Physics \& Theoretical Physics Institute, University of Alberta, Edmonton, Alberta T6G 2E1, Canada}

\author{Patrick M. Lenggenhager\,\orcidlink{0000-0001-6746-1387}}
\thanks{These authors contributed equally to this work.}
\affiliation{Department of Physics, University of Zurich, Winterthurerstrasse 190, 8057 Zurich, Switzerland}
\affiliation{Condensed Matter Theory Group, Paul Scherrer Institute, 5232 Villigen PSI, Switzerland} 
\affiliation{Institute for Theoretical Physics, ETH Zurich, 8093 Zurich, Switzerland}
\affiliation{Max Planck Institute for the Physics of Complex Systems, Nöthnitzer Str. 38, 01187 Dresden, Germany}

\author{Titus Neupert\,\orcidlink{0000-0003-0604-041X}}
\affiliation{Department of Physics, University of Zurich, Winterthurerstrasse 190, 8057 Zurich, Switzerland}

\author{Joseph Maciejko\,\orcidlink{0000-0002-6946-1492}}
\affiliation{Department of Physics \& Theoretical Physics Institute, University of Alberta, Edmonton, Alberta T6G 2E1, Canada}
\affiliation{Quantum Horizons Alberta, University of Alberta, Edmonton, Alberta T6G 2E1, Canada}

\author{Tom\'{a}\v{s} Bzdu\v{s}ek\,\orcidlink{0000-0001-6904-5264}}
\email{tomas.bzdusek@uzh.ch}
\affiliation{Department of Physics, University of Zurich, Winterthurerstrasse 190, 8057 Zurich, Switzerland}
\affiliation{Condensed Matter Theory Group, Paul Scherrer Institute, 5232 Villigen PSI, Switzerland}



\begin{abstract}
We extend the notion of topologically protected semi-metallic band crossings to hyperbolic lattices in a negatively curved plane. Because of their distinct translation group structure, such lattices are associated with a high-dimensional reciprocal space. In addition, they support non-Abelian Bloch states which, unlike conventional Bloch states, acquire a matrix-valued Bloch factor under lattice translations. Combining diverse numerical and analytical approaches, we uncover an unconventional scaling in the density of states at low energies, and illuminate a nodal manifold of codimension five in the reciprocal space. The nodal manifold is topologically protected by a nonzero second Chern number, reminiscent of the characterization of Weyl nodes by the first Chern number.
\end{abstract}

\maketitle


\emph{Introduction.---}Topological semimetals are gapless electronic phases with stable band-touching points at the Fermi energy. 
As solid-state analogs of relativistic fermions, low-energy excitations in semimetals are often imbued with nontrivial topological properties. 
Dirac and Weyl semimetals in three dimensions (3D), for instance, exhibit a range of intriguing properties such as protected surface states, anomalous Hall effect, chiral anomaly, and unusual magneto-resistance~\cite{Wan:2011,Zyuzin:2012,Hosur:2013,Burkov:2014,Vazifeh:2013,vafek2014, Armitage2018,ong2021}. 
Broadly, semimetals can be categorized based on spatial dimension, origin of band crossings, and properties of the nodal manifold (e.g., its degeneracy and codimension)~\cite{Gao2019}. 
All existing models and material realizations, however, are typically embedded in geometrically flat Euclidean space.

\begin{figure}[t]
    \includegraphics[width=\columnwidth]{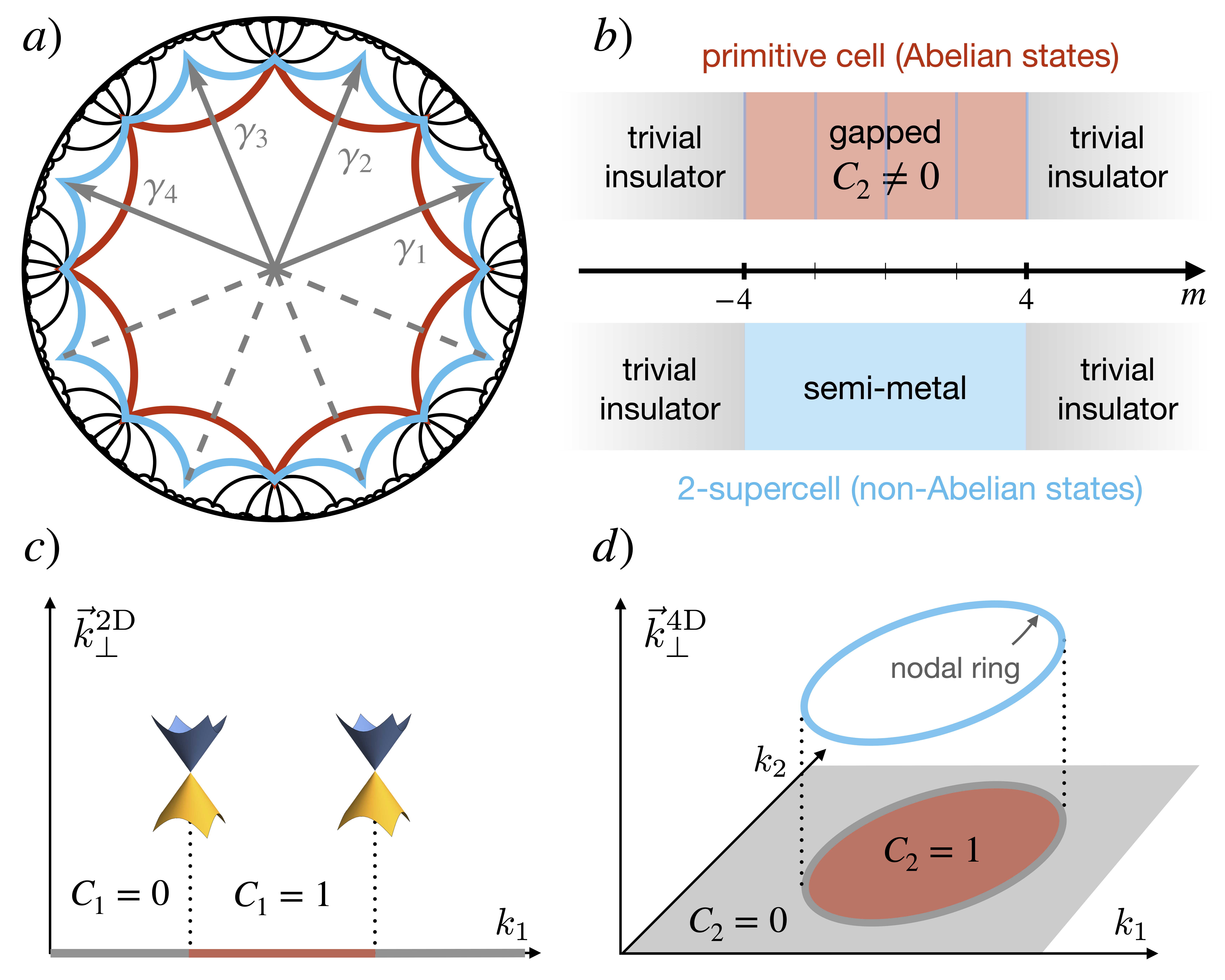}
    \caption{Real- and reciprocal-space structure of the hyperbolic non-Abelian semimetal. 
    (a)~The model is defined on the $\lee$ lattice, displayed in the Poincar\'{e} disk model, wherein the single- and two-site unit cells are marked in red and blue, respectively. 
    Translation generators of the primitive cell are denoted by $\gamma_j$. 
    (b)~Unit cell size affects the accessible reciprocal space and, consequently,
    the phase diagram for $|{m}|\,{<}\,4$: the primitive cell suggests topological insulator phases with Chern numbers $|C_2|\,{=}\,1$ or $3$ whereas the $2$-supercell description exhibits a topological semimetal. 
    (c)~Comparison to Weyl semimetals, where 2D planes (along $\vec{k}_{\perp}^{\rm 2D}$) between the Weyl nodes have a finite first Chern number. (d)~Similarly, in the hyperbolic semimetal, when $|m| \,{\in}\, (3,4)$, a nodal ring separates 4D orthogonal subspaces (along $\vec{k}_{\perp}^{\rm 4D}$) with a nontrivial second Chern number.}
    \label{fig1} 
\end{figure}

Hyperbolic lattices---regular tessellations of 2D hyperbolic space with constant negative curvature~\cite{Coxeter:1957}---provide the means to study quantum matter in non-Euclidean geometry.
Recent realizations in coplanar waveguide resonators~\cite{Kollr2019} and topoelectrical circuits~\cite{Chen2023a,Lenggenhager2022,Zhang:2022,Zhang_2023} have catalyzed the research of physical properties of hyperbolic lattices. 
Theoretical studies of hyperbolic analogs of topological insulators~\cite{Yu2020,urwyler2022hyperbolic, Liu2022,Zhang:2022,Zhang_2023,Pei:2023,Chen2023b} show that bulk topological invariants and protected edge modes persist in hyperbolic space. 
Other celebrated condensed-matter phenomena such as Hofstadter spectra~\cite{ikeda2021,stegmaier2022}, flat bands~\cite{kollar2020,saa2021,Bzdusek2022,mosseri2022}, higher-order topology~\cite{liu2023,tao2023,sun2023},  strong correlations~\cite{zhu2021,bienias2022,gluscevich2023,gluscevich2023b}, and fractons~\cite{yan2019,yan2022} have also been explored in the hyperbolic~context.

Hyperbolic lattices have unique features that unlock new physics beyond the Euclidean paradigm. 
Hyperbolic translation groups, known as Fuchsians, are non-Abelian~\cite{Boettcher:2022}. 
Accordingly, their unitary irreducible representations (IRs) can have a dimension $d$ larger than one. 
Via a generalized Bloch theorem, hyperbolic band theory (HBT)~\cite{maciejko2021,Maciejko2022} provides a reciprocal-space description of hyperbolic lattices and their eigenstates. Hyperbolic Bloch states are classified as Abelian [$\textrm{U}(1)$]~\cite{maciejko2021} or non-Abelian [$\textrm{U}(d)$ for $d\,{>}\,1$]~\cite{Maciejko2022,Cheng2022} depending on the rank $d$ of the matrix-valued Bloch factor acquired under lattice translations. 
Whereas $\textrm{U}(1)$ eigenstates are labeled by a vector analogous to crystal momentum, additional quantum numbers are required to distinguish $\textrm{U}(d)$ Bloch states. 
The problem of accessing and characterizing non-Abelian Bloch states constitutes an actively investigated topic~\cite{Lux:2022, Lux2023,Mosseri2023,Lenggenhager2023}. 

In this work, we present a hyperbolic lattice model whose semimetallic topology is inherently rooted in non-Abelian Bloch states. 
While the Abelian spectrum is gapped, we demonstrate the presence of a topologically protected nodal manifold in the non-Abelian $\textrm{U}(2)$ Bloch bands, which results in a gapless density of states (DOS). 
Summarized in \cref{fig1}, this nodal manifold in reciprocal space is associated with a finite second Chern number $C_2$. 
There is no Euclidean analog of such a phase because 2D Euclidean lattices admit neither non-Abelian Bloch states nor a second Chern number.
We term this phase a \emph{hyperbolic non-Abelian semimetal}. 


\emph{Model.---}HBT suggests a way to translate higher-dimensional Euclidean models to the hyperbolic plane. 
The mapping is especially lucid for 4D Euclidean Hamiltonians because of a $\textrm{one-to-one}$ correspondence between the Euclidean BZ and the simplest [i.e.~the $\textrm{U}(1)$/Abelian] hyperbolic BZ of the $\lee$ lattice---the hyperbolic lattice where eight regular octagons meet at each vertex. 
Motivated by this connection, we study the hyperbolic counterpart of the 4D quantum Hall insulator (QHI). 

Conventionally, the 4D Euclidean QHI is defined by a Dirac tight-binding model on the hypercubic lattice \cite{zhang2001, Qi2008}:
\begin{align}
    \mathcal{H} = \sum_{r, j} \left[ \psi_r^\dagger \frac{\GM{5} - \mathrm{i} \GM{j}}{2} \psi_{r+j} + {\rm h.c.} \right] + m \sum_r \psi_r^\dagger \GM{5} \psi_r,
    \label{eq:Hr}
\end{align}
where $\{\GM{\mu} \}_{\mu\,{=}\,1}^{5}$ denote the Dirac matrices, $\{ \GM{\mu}, \GM{\nu} \} = 2 \delta_{\mu \nu} \mathbbm{1}$. 
In terms of the Pauli matrices $\sigma_{i}$, the choice $\{\GM{\mu} \}_{\mu\,{=}\,1}^{5} = \{\sigma_{1} \sigma_{0}, \sigma_{2} \sigma_{0}, \sigma_{3} \sigma_{1}, \sigma_{3} \sigma_{2}, \sigma_{3} \sigma_{3} \}$ ensures that $\GM{\mu}$ is real (imaginary) for $\mu$ odd (even). 
The four components of the spinor $\psi_r$ correspond to internal degrees of freedom at site $r\,{\in}\,\mathbb{Z}^4$, and $j\,{=}\,1,\ldots,4$ label orthogonal directions. 

The $\lee$ hyperbolic lattice, illustrated in \cref{fig1}(a), provides a convenient setting to reinterpret the above model~\cite{Zhang_2023}. 
Formally, the lattice is defined by its Fuchsian group $\HTG$, an infinite translation group with four noncommuting generators $\gamma_j$ [arrows in \cref{fig1}(a)] that obey a single constraint~\cite{maciejko2021}:
\begin{equation}
    \HTG = \left<\gamma_1,\gamma_2,\gamma_3,\gamma_4\,|\,\gamma_1 \gamma_2^{-1} \gamma_3 \gamma_4^{-1} \gamma_1^{-1} \gamma_2 \gamma_3^{-1} \gamma_4 \,{=}\,1\right>.
    \label{eq:constraint}
\end{equation}
The `hyperbolized' 4D QHI Hamiltonian is realized with the following reinterpretation: $r$ is the unit cell position on the hyperbolic plane and $r\,{+}\,j$ denotes a cell shifted by the translation operator $\gamma_j$. 

To obtain a reciprocal-space perspective, we proceed in two steps. 
First, as in the Euclidean case, the lattice needs to be compactified, i.e., periodic boundary conditions (PBC) are imposed and edges of the unit cell related by translation symmetry are identified, producing a surface of genus $\genus \,{\geq}\, 2$~\cite{maciejko2021}. 
Unlike the Euclidean case, choosing an enlarged $n\textrm{-site}$ unit cell ($n\,{>}\,1$) increases the genus~\cite{Lenggenhager2023}. 
Second, one has to choose a $d$-dimensional IR $D_{\lambda}$ of the translation group, where $\lambda$ labels a point in a $[2d^2(\genus \,{-}\,1)\,{+}\,2]$-dimensional space of $d$-dimensional IRs of $\Gamma$~\cite{Maciejko2022}.
A generic Bloch Hamiltonian $H_{\lambda}^{(n,d)}$ thus obtained determines states in the Brillouin zone BZ$^{(n,d)}$. 
The integers $n$ and $d$ provide two handles to access different sectors of the hyperbolic reciprocal space. 

On the $\lee$ lattice, the smallest unit cell ($n\,{=}\,1$ or primitive cell) is given by the red octagon in \cref{fig1}(a). 
It compactifies to a $\genus\,{=}\,2$ surface when the opposite edges are identified. 
Further, in $\textrm{U}(1)$ representations, $\lambda$ are the crystal momenta whose components correspond to the four noncontractible cycles of the $\textrm{genus-}2$ surface. 
The mapping $D_{\lambda}(\gamma_j) \,{=}\, D_{\bk}(\gamma_j) \,{=}\, \mathrm{e}^{\mathrm{i} k_j}$ satisfies the constraint in \cref{eq:constraint} and defines the $\textrm{U}(1)$ Bloch Hamiltonian 
\begin{align}
    H^{(1,1)}_{\bk} = \sum_{\mu} d_{\bk}^{\mu} \GM{\mu},
    \label{eq:4D-QHI:Bloch-Ham}
\end{align}
with $d_{\bk}^{\mu} = (\sin k_1, \sin k_2, \sin k_3, \sin k_4, m_{\bk})$ and $m_{\bk} \,{=}\, m+\sum_j \cos k_j$. 
Evidently, these expressions are equivalent to the Fourier transform of the Hamiltonian in \cref{eq:Hr}, such that $\textrm{U}(1)$-HBT reproduces results from the 4D Euclidean case: at half-filling, the model undergoes topological phase transitions at $|m|\,{=}\,0,2,4$, but is gapped otherwise with a nonvanishing $C_2$~\cite{Qi2008}, as depicted in \cref{fig1}(b).
The corresponding gapped DOS for $m\,{=}\,3$ is plotted with the red curve in Fig.~\ref{fig3}(a).

However, as foreshadowed earlier, one should also account for $d\,{>}\,1$ IRs of $\HTG$. 
We first demonstrate the existence of `in-gap' states by real-space diagonalization on PBC clusters, and then show that these states originate from non-Abelian reciprocal space using $\textrm{U}(2)$-HBT and the supercell method (\cref{fig2}).

\begin{figure}[t]
    \includegraphics[width=\columnwidth]{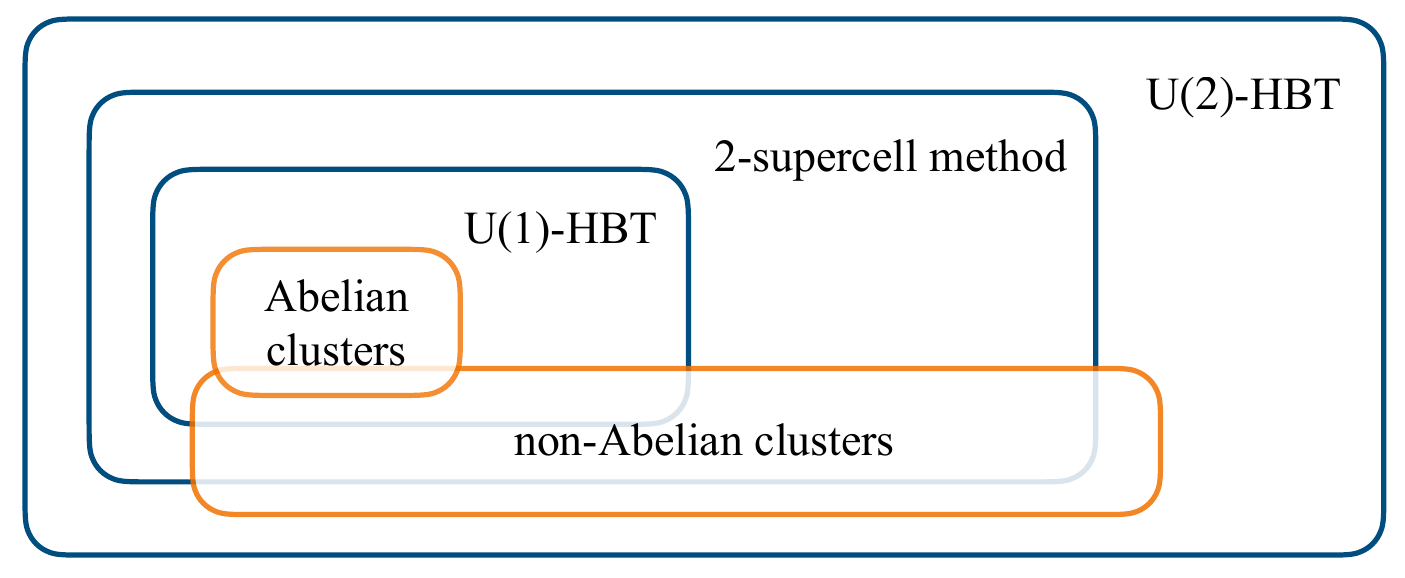}
    \caption{ 
    Schematic summary of techniques used in this work to access eigenstates of hyperbolic lattices. Orange (blue) indicates real- (reciprocal-)space methods. 
    }
    \label{fig2}
\end{figure}


\emph{PBC clusters.---}Closed hyperbolic lattices formed by pairing open edges of a finite-sized lattice are termed PBC clusters~\cite{sausset2007}.
Efficient construction of large PBC clusters can be achieved through computational methods in geometric group theory~\cite{SM}.
Depending on the particular edge pairing, PBC clusters can be classified into two categories: Abelian or non-Abelian~\cite{Maciejko2022}.
In Abelian clusters, the non-Abelian hyperbolic translation group $\HTG$ modulo the edge pairing becomes Abelian.
In other words, the translation generators effectively commute on Abelian clusters, so the wavefunctions of a translationally invariant Hamiltonian must be 1D IRs of the translation group in \cref{eq:constraint}.
Here, we exclusively consider non-Abelian PBC clusters as we are interested in higher-dimensional IRs arising from the noncommutativity of the generators, which are necessary to approximate the thermodynamic limit~\cite{Lux:2022, Lux2023, Mosseri2023, Lenggenhager2023}. 
We numerically diagonalize the hyperbolized Hamiltonian (\ref{eq:Hr}) on 13~non-Abelian $\lee$ PBC clusters, with the number of primitive cells ranging from 300 to 1800.
The resulting DOS averaged over these clusters for $m\,{=}\,3$ is depicted by the black curve in \cref{fig3}(a). Our first main finding is the discovery of states within the gap of the Abelian Bloch states. 

\begin{figure}[ht]
    \includegraphics[width=\columnwidth]{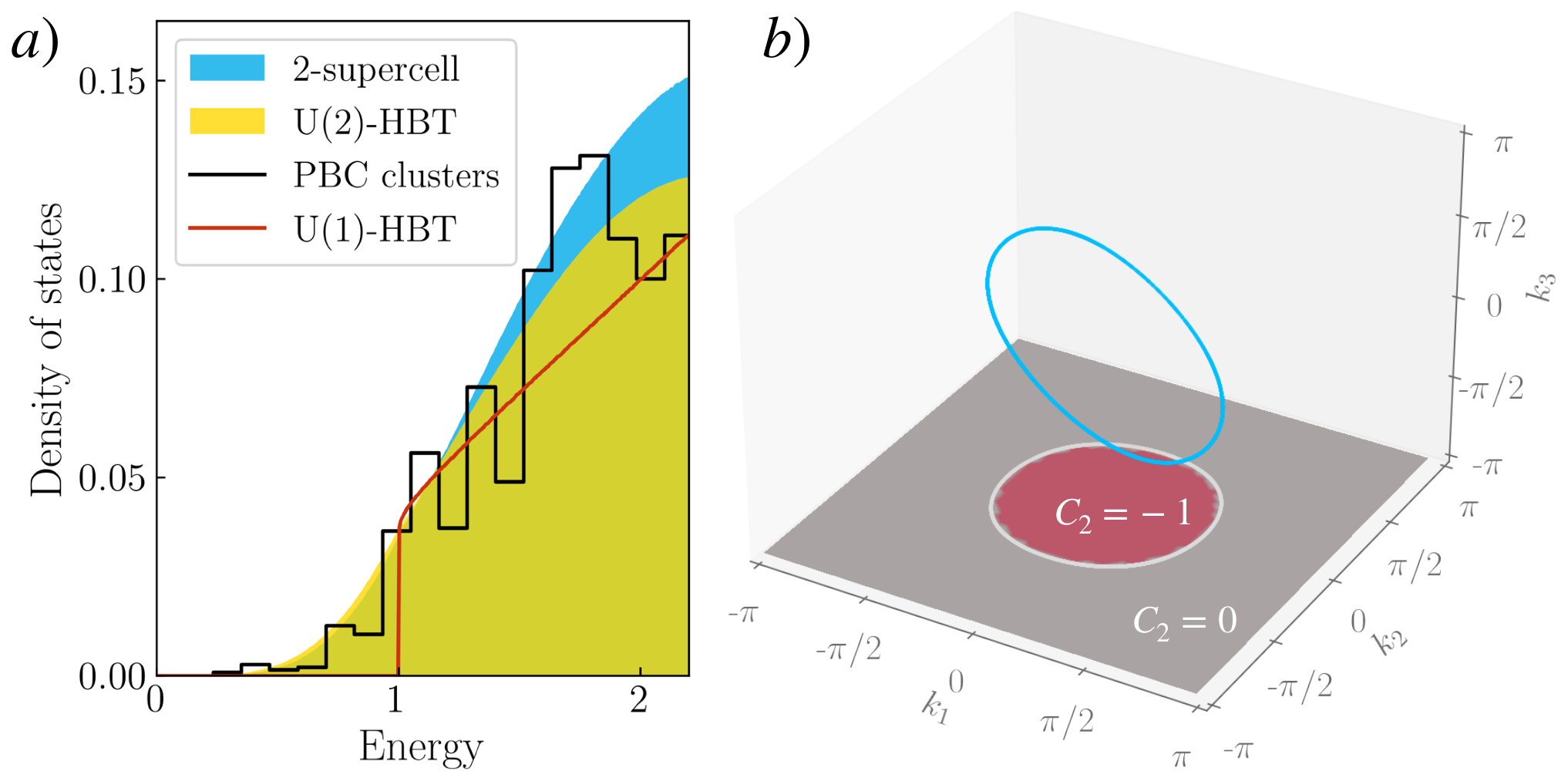}
    \caption{
    Hyperbolic non-Abelian semimetal with Dirac mass $m\,{=}\,3$. 
    (a)~Energy spectra computed with various numerical approaches (cf.~\cref{fig2}).
    In contrast to the hard gap predicted by $\textrm{U}(1)$-HBT, the other approaches indicate a semimetallic DOS symmetric about zero energy.
    (b)~Low-energy excitations are manifested by a nodal ring in the $2\textrm{-supercell}$ description. 
    Numerically obtained nodes in the 6D BZ$^{(2,1)}$ have been projected to the first three momentum components. Projection onto the first two coordinates appears as the white curve. 
    Computing the second Chern number $C_2$ as a function of $(k_1{,}k_2)$ shows that 4D subspaces passing through the ring are topological.
    }
    \label{fig3}
\end{figure}


\emph{$\textrm{U}(d)$-HBT.---}To demonstrate that these in-gap states can be attributed to non-Abelian Bloch physics, we first describe the corresponding BZs~\cite{shankar2023}. 
Here, the Bloch states transform in representations $D_{\lambda}(\gamma_j) \,{=}\, U_j\,\mathrm{e}^{\mathrm{i} k_j}$, where $\bU \,{=}\, \{U_j\}_{j\,{=}\,1}^{4} \,{\in}\, \textrm{SU}(d)$ are special unitary $(d\times d)$ matrices that satisfy \cref{eq:constraint}. 
The specification of $\lambda$ requires $2d^2\,{+}\,2$ parameters, of which four may be understood as hyperbolic momenta $\bk \,{=}\, \{k_j\}_{j\,{=}\,1}^4$ and the remaining $2d^2\,{-}\,2$ characterize $\bU$.
Focusing on a single primitive cell ($n\,{=}\,1$), the states in $\mathrm{BZ}^{(1,d)}$ can be described by the non-Abelian Bloch Hamiltonian 
\begin{equation}
    \!H_{\lambda}^{(1,d)} \!=\! \sum_j\! \left[\!\left( \frac{\GM{5} - \mathrm{i} \GM{j}}{2}\! \right) \!\otimes\! D_{\lambda}(\gamma_{j}) + {\rm h.c.} \!\right] \!+\! m \left( \GM{5} \otimes \mathbbm{1}_d  \right)\!.\! 
    \label{eq:bloch_2D}
\end{equation}
Although an explicit parameterization of $\mathrm{BZ}^{(1,d)}$ is currently lacking, one may randomly sample the subspace of $\textrm{SU}(d)$ matrices that obey \cref{eq:constraint} using a procedure described in~\cite{SM}. 
For different choices of $\bU$ and $\bk$, a numerical diagonalization of $H_{\lambda}^{(1,d)}$ allows for a reconstruction of the $\textrm{U}(d)$ spectra. 
We here focus on the simplest nontrivial case $d\,{=}\,2$, since spectra of small systems that could be realized in experiments are dominated by low-dimensional representations~\cite{Maciejko2022}.
The DOS thus obtained is plotted in yellow in \cref{fig3}(a). 
Consistent with our exact diagonalization results on PBC clusters, non-Abelian states fill the band gap of the Abelian spectrum. 


\emph{Supercell method.---}An alternative way to access non-Abelian Bloch states is the supercell method~\cite{Lenggenhager2023}. 
Instead of tessellating the infinite lattice with copies of the primitive cell [red octagon in \cref{fig1}(a)], the lattice is subdivided into collections of $n\,{>}\,1$ primitive cells, called $n$-\emph{supercells} [blue polygon in \cref{fig1}(a) depicts the $2$-supercell].
Copies of the primitive cell covering the lattice are related by elements of the full translation group of the lattice $\HTGuc\,{=}\,\HTG$, while copies of $n$-supercell are related only by elements of an $\textrm{index-}n$ normal subgroup $\HTGnc\,{\triangleleft}\,\HTGuc$.

In either case, translation symmetry allows a single-cell description of the infinite system by going to hyperbolic reciprocal space as described previously.
In Euclidean lattices, enlarging the unit cell introduces band folding and a reduction in BZ size while the total number of states remains constant.
In contrast, in hyperbolic lattices, the decomposition of reciprocal space into the generalized BZs is changed:
The BZs for the primitive cell $\mathrm{BZ}^{(1,d)}$ are different from the ones for $n$-supercells, $\mathrm{BZ}^{(n,d)}$.
Nevertheless, their union (over IR dimensions $d$) is expected to remain the same, such that both are descriptions of the same infinite system.
Specifically, as a consequence of the negative curvature, the genus $\genus$ of the compactified manifold (and thereby the dimension of $\mathrm{BZ}^{(n,1)}$ which is a $2\genus{}$-dimensional torus), grows linearly with the supercell size. 
In particular, $\mathrm{BZ}^{(1,1)}$ is 4D, while $\mathrm{BZ}^{(2,1)}$ is 6D, with \textrm{BZ}$^{(1,1)}$ passing through the center of \textrm{BZ}$^{(2,1)}$~\cite{SM}.

The advantage of working with supercells comes from the fact that, in contrast to the non-Abelian BZs, the $\textrm{U}(1)$-BZs are well-understood and an explicit parametrization in terms of Bloch phase factors is known.
Due to $\HTGnc$ being an index-$n$ subgroup of $\HTGuc$, 
$\textrm{U}(1)$ representations of $\HTGnc$ induce $\textrm{U}(n)$ representations of $\HTGuc$, which are either reducible or irreducible~\cite{Lenggenhager2023}. 
Due to the growing dimension of \textrm{BZ}$^{(n,1)}$ with supercell size $n$, most of the $\textrm{U}(n)$ representations of $\HTGuc$ are non-Abelian, i.e., they do not split into a product of $\textrm{U}(1)$ representations.
Therefore, by applying $\textrm{U}(1)$-HBT to larger supercells, more non-Abelian states are captured.

We focus again on the simplest nontrivial case of $2$-supercells, thus gaining access to certain non-Abelian $\textrm{U}(2)$ Bloch states.
We utilize the \textsc{HyperCells} package for GAP~\cite{HyperCells2023,Conder2007,GAP4} and the \textsc{HyperBloch} package for Mathematica~\cite{HyperBloch}, to sample states in $\mathrm{BZ}^{(2,1)}$.
The result, plotted in blue in \cref{fig3}(a), confirms our earlier findings: a semimetallic DOS within the gap of the Abelian Bloch states.


\emph{Nodal ring.---}In unison, the three methods paint a clear picture: While the $\textrm{U}(1)$ spectrum is gapped, non-Abelian Bloch states arise inside the gap. 
Since the three DOS curves in \cref{fig3}(b) match each other near zero energy, we use the $2$-supercell approach to derive an approximate low-energy theory by studying the Bloch Hamiltonian $H^{(2,1)}_{\bk}$ analytically for $|m|\,{\gtrsim}\, 3$. 
We find that its zero-energy eigenstates form a \emph{nodal ring} inside $\mathrm{BZ}^{(2,1)}$ whose projection onto a 3D subspace is shown in \cref{fig3}(b). 
This nodal ring is protected by space-time-inversion symmetry~\cite{SM}. 
Through a basis change of the Hamiltonian, linear transformation of the momentum space from $\bk$ to $\bq$, and a Taylor expansion around the origin~\cite{SM}, we find that the spectral gap vanishes when 
\begin{equation}
    q_1^2+q_2^2 = 6(4-|m|)\quad\textrm{and}\quad q_{3}=q_{4}=q_{5}=q_{6}=0.
    \label{eq:circle}
\end{equation}
As a circle with mass-dependent radius, the nodal manifold shrinks to a point at the topological phase transitions $|m|\,{=}\,4$.

To understand the nodal topology, it is useful to draw parallels with Weyl semimetals in 3D Euclidean space. 
Weyl nodes can be understood as sources and sinks of Berry curvature in momentum space. 
As a consequence of the dipolar configuration, a 2D plane situated between two nodes of opposite charge experiences a net Berry flux and features a finite Chern number, cf.~\cref{fig1}(c).
In a similar spirit, one may expect the nodal ring in 6D hyperbolic reciprocal space to encode nontrivial topology. 
Since our starting point was the 4D QHI, the second Chern number $C_2$~\cite{MocholGrzelak2018} is the relevant topological quantity. 
Indeed, integration over the four momenta $k_{3,\ldots,6}$ shows that the nodal ring projection to the $(k_1{,}k_2)\textrm{-plane}$ encloses a region where $C_2\,{=}\,-1$, as shown in \cref{fig3}(b).

While the DOS of 3D Weyl semimetals near $E=0$ scales as $\rho(E)\,{\propto}\,E^2$~\cite{Zyuzin:2012}, we expect that for the hyperbolic non-Abelian semimetal it scales as $\rho(E)\,{\propto}\, E^4$. 
To see this, first observe that 4D QHI Hamiltonian $\mathcal{H}$ carries a spinful space-time-inversion symmetry, $\mathcal{PT}\,{=}\,\GM{2}\GM{4}\mathcal{K}$ with $(\mathcal{PT})^2\,{=}\,{{-}\mathbbm{1}}$ and complex conjugation $\mathcal{K}$.
We find~\cite{SM} that the $\mathcal{PT}$ symmetry is inherited by the $2$-supercell Bloch Hamiltonian $H_{\bk}^{(2,1)}$. 
It follows from the von Neumann-Wigner theorem~\cite{vonNeumann:1929,Bzdusek:2017} that $H_{\bk}^{(2,1)}$ generically exhibits band nodes of codimension $\mathfrak{d}\,{=}\,5$, consistent with the observed nodal ring of dimension $\dim[\textrm{BZ}^{(2,1)}]\,{-}\,\mathfrak{d}\,{=}\,(6\,{-}\,5)\,{=}\,1$. 
Assuming the most generic scenario where the bands disperse linearly in the five directions perpendicular ($\perp$) to the nodal ring, quartic DOS scaling $\rho(E) \,{=}\, \int\mathrm{d}^5k_\perp \delta(E \,{-}\, v |k_\perp|) \,{\propto}\, E^4$ follows. 
Through numerical analysis~\cite{SM}, we find that within the $2$-supercell approximation for $\left|{m}\right|\,{\in}\,(0,4)$ the DOS scales as $\rho(E)\,{\propto}\,E^\alpha$ with $\alpha\,{\in}\,(3.3,4.1)$. 
Specifically for $m\,{\in}\,(3,4)$, where a single nodal ring is present, we confirm the predicted value $\alpha \,{=}\, 4$ within the numerical precision.


\begin{figure}[t]
    \includegraphics[width=\columnwidth]{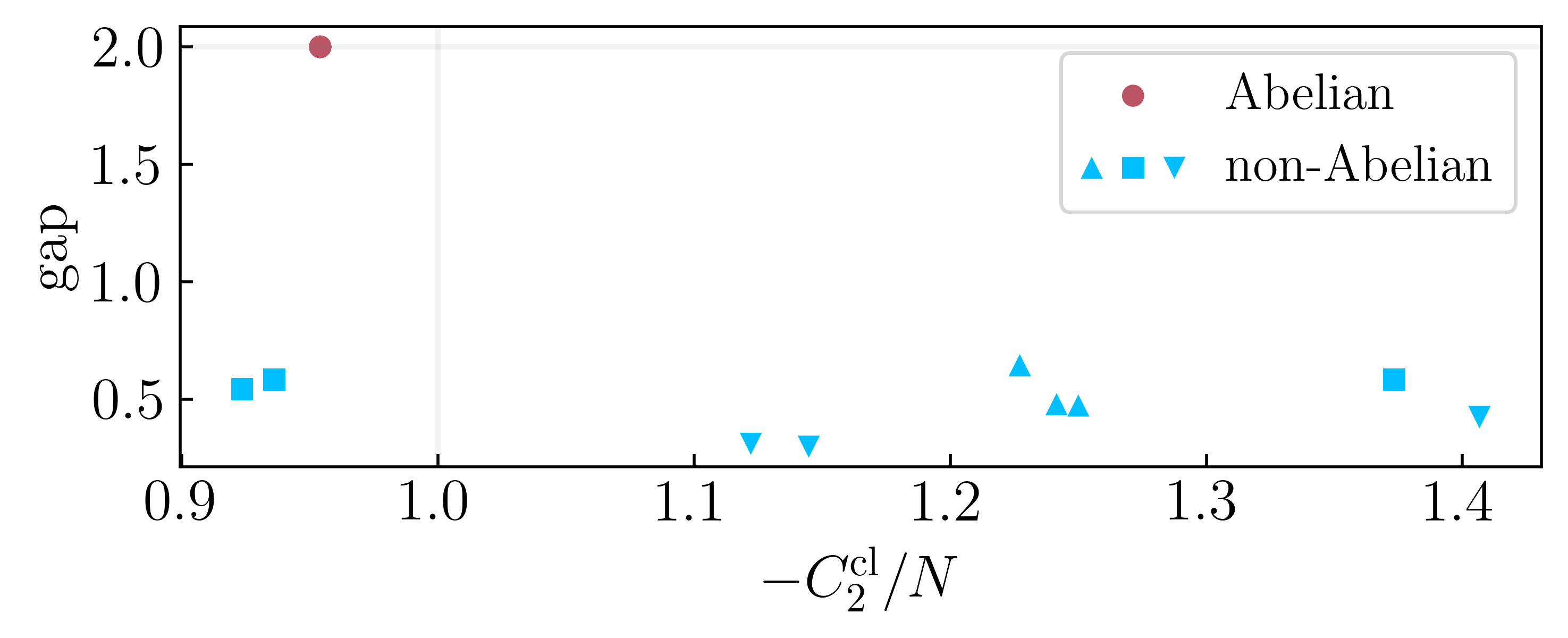}
    \caption{Flux insertion in PBC clusters. We study Abelian and non-Abelian clusters with $N=16, 18, 20$ ($\blacktriangle,\blacksquare,\blacktriangledown$)  primitive cells and use a phase space grid of $20^4$ to compute the cluster-specific second Chern number $C_2^{\textrm{cl}}$ with $m\,{=}\,3$. 
    While the Abelian clusters agree with U(1)-HBT, non-Abelian clusters show smaller gaps owing to the semimetallic states and $C_2^{\textrm{cl}}/N$ has a larger variance.}
    \label{fig4} 
\end{figure}

\emph{Experimental signatures.---}A priori, any physical realization of a hyperbolic lattice has states that transform under higher-dimensional IRs. 
To experimentally probe non-Abelian states, however, one needs to isolate them from the $\textrm{U}(1)$ states. 
The hyperbolic non-Abelian semimetal provides an energy window to \emph{exclusively} access these states. 
In coplanar waveguide resonators~\cite{Kollr2019}, for instance, a gap would appear as a dip in the transmission spectra, whereas a finite DOS at low energies would support transmission. 

To understand the physical implications of nodal topology, recall the case of Weyl semimetals. 
There, each 2D slice between the Weyl nodes, carrying nonzero value of $C_1$ [cf.~Fig.~\ref{fig1}(c)], contributes to anomalous Hall conductance $\sigma_{xy}\,{=}\,(e^2/2\pi h)\,\Delta k$ where $\Delta k$ is the momentum node separation~\cite{Zyuzin:2012,Hosur:2013,Burkov:2014}.
Similarly, the response of the hyperbolic non-Abelian semimetal to external fields should be contingent on the nodal manifold geometry.
We therefore study the second Chern number $C_2^{\textrm{cl}}$ in small PBC clusters (consisting of $N$ primitive cells), which is generated by filled states under flux insertion $(\GM{5}\,{-}\,i\GM{j})\mapsto (\GM{5}\,{-}\,i\GM{j})e^{i\phi_j}$ with $\boldsymbol{\phi}\in[0,2\pi)^{\times 4}$ in Eq.~(\ref{eq:Hr}) \cite{SM}.
Practically, such phase manipulation can be implemented with tunable complex-phase elements \cite{Chen2023a} in topoelectrical circuits \cite{Lenggenhager2022,Zhang:2022,Zhang_2023}, where the eigenstates necessary for computing $C_2^{\textrm{cl}}$ can be accessed with simple oscilloscope measurements.
For Abelian states, the flux insertion induces a momentum shift $\boldsymbol{k}\,{\mapsto}\, \boldsymbol{k} \,{+}\, \boldsymbol{\phi}$; therefore, all eigenstates contribute equally to the integration, implying $C_2^{\textrm{cl}}/N \,{=}\, C_2$.
In contrast, the trajectory of non-Abelian states under flux insertion may fall outside the nodal-ring manifold [cf.~Fig.~\ref{fig3}(b)], or pass through an additional nodal manifold not represented in the $\textrm{6D}\;\textrm{BZ}^{(2,1)}$.
In both cases, one anticipates deviation from the linear scaling $C_2^{\textrm{cl}}\,{=}\,N C_2$. 
\Cref{fig4} (supplemented with further data in \cite{SM}) shows the result of such integration for randomly selected Abelian (red) and non-Abelian (blue) clusters, revealing agreement with our theoretical arguments.
We leave the study of the convergence of $C_2^{\textrm{cl}}/N$ in the thermodynamic limit~\cite{Lux:2022,Lux2023,Mosseri2023,Lenggenhager2023} to future works.


\emph{Conclusion and outlooks.---}Semimetals are fertile platforms that emulate particle physics models on one hand and actualize solid-state notions of topology on the other. 
Conventionally, they have been studied in Euclidean space. 
Extending the notion of topological band nodes to negatively curved space, we investigated the hyperbolized 4D QHI Hamiltonian.
The model exhibits striking features that have no counterpart in Euclidean crystals, such as band nodes stabilized by a second Chern number and low-energy excitations transforming exclusively in non-Abelian IRs of the hyperbolic translation group. 

Our findings motivate broader and more systematic studies of band topology in hyperbolic lattices.
In particular, the semimetallic nature of the hyperbolized QHI model inspires the search for \emph{fully gapped} hyperbolic topological insulators in which the region with a finite second Chern number [red in Fig.~\ref{fig3}(b)] spans the entire higher-dimensional BZ~\cite{Chen2023b}.
Furthermore, while the $\textrm{U}(1)$ and $\textrm{U}(2)$ representations considered here dominate the spectra of small systems that can be realized in experiments~\cite{Maciejko2022}, higher-dimensional $\textrm{U}(d)$ representations are necessary to describe the system in the thermodynamic limit~\cite{Lux:2022,Lux2023,Mosseri2023,Lenggenhager2023}. 
Our preliminary data obtained for larger $n$-supercells~\cite{SM} suggest not only a change in the DOS scaling exponent $\alpha$ but also a gradual evolution of the non-Abelian semimetal below (above) a critical value $|{m_\textrm{c}}|\simeq 2.5$ to a non-Abelian metal (insulator) phase.

On the experimental front, to probe the transport associated with $C_2$~\cite{zhang2001}, one needs to reconcile a dimensionality mismatch: the hyperbolic plane is two-dimensional, whereas four orthogonal directions enter (via the four-component Levi-Civita symbol) the nonlinear response to applied fields~\cite{Qi2008,Price:2015}. 
Experimental realizations would also benefit from a generalization of the hyperbolic non-Abelian semimetal to $\{p{,}q\}$ lattices with a smaller curvature per site than the $\lee$ lattice assumed here.
Furthermore, to enhance the versatility of experimental realizations, it is desirable to seek implementations of hyperbolic lattices in other analog simulators including silicon photonics~\cite{Huang:2024}, as well as in quantum platforms such as optical tweezer arrays~\cite{Kaufman2021,Spar2022} and trapped ions~\cite{Monroe:2021}.
In a broader context, exciting open questions arise for hyperbolic topological models in relation to the bulk-boundary correspondence~\cite{Yu2020,urwyler2022hyperbolic,Liu2022} and the holographic principle~\cite{Witten:1998,Zaanen:2015,Asaduzzaman:2020,Chen:2023:Ads/CFT}.


\let\oldaddcontentsline\addcontentsline     
\renewcommand{\addcontentsline}[3]{}        

\emph{Acknowledgements.---} The code and the generated data used to arrive at the conclusions presented in this work are publicly available in Ref.~\cite{Tummuru:2023:SDC}. 
We would like to thank I.~Boettcher and R.~Thomale for valuable discussions and the referees for their insightful questions. 
T.~T.~and T.~N.~acknowledge funding from the European Research Council (ERC) under the European Union’s Horizon 2020 research and innovation programm (ERC-StG-Neupert-757867-PARATOP). 
A.~C.~was~supported by the University of Alberta startup fund UOFAB Startup Boettcher and the Avadh Bhatia Fellowship. 
J.~M.~was supported by NSERC Discovery Grants \#RGPIN-2020-06999 and \#RGPAS-2020-00064; the Canada Research Chair (CRC) Program; the Government of Alberta's Major Innovation Fund (MIF); the Tri-Agency New Frontiers in Research Fund (NFRF, Exploration Stream); and the Pacific Institute for the Mathematical Sciences (PIMS) Collaborative Research Group program.
P.~M.~L.~and T.~B.~were supported by the Ambizione grant No.~185806 by the Swiss National Science Foundation (SNSF).
T.~B.~was supported by the Starting Grant No.~211310 by SNSF.
P.~M.~L. acknowledges funding by the European Union (ERC, QuSimCtrl, 101113633). Views and opinions expressed are however those of the authors only and do not necessarily reflect those of the European Union or the European Research Council Executive Agency. Neither the European Union nor the granting authority can be held responsible for them.


\pdfbookmark[1]{References}{references}
\bibliography{ref}


\let\addcontentsline\oldaddcontentsline     

\clearpage

\renewcommand{\theequation}{S\arabic{equation}}
\renewcommand{\thefigure}{S\arabic{figure}}
\renewcommand{\theHfigure}{S\arabic{figure}}

\setcounter{page}{1}
\setcounter{equation}{0}
\setcounter{section}{0}
\setcounter{figure}{0}

\title{Supplementary Material for: Hyperbolic non-Abelian semimetal}
\maketitle

\begingroup
\hypersetup{linkcolor=black}
\tableofcontents
\endgroup


\begin{figure*}[t]
\includegraphics[width=\textwidth]{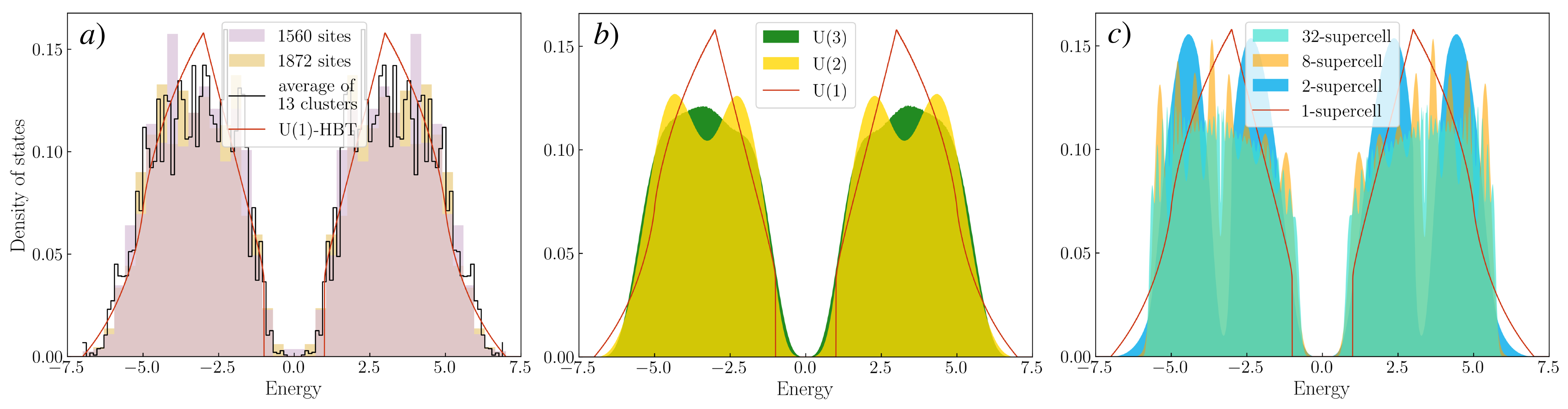}
    \caption{A comparison of different techniques to access non-Abelian Bloch states
    in the hyperbolic QHI model at $m=3$. 
    (a)~Sample DOS of two PBC clusters, alongside the average of 13 clusters (120 bins), per the discussion in Sec.~\ref{app:clusters}. 
    (b)~$\textrm{U}(d)$-HBT for $d=1,2$ and $3$, as described in Sec.~\ref{app:sud}. 
    While $\textrm{U}(1)$-HBT is gapped, spectra of higher-dimensional IRs are semimetallic. 
    (c)~Modelling with the supercell approach dicussed in Sec.~\ref{app:supercell}.
    Going beyond the 2-supercell, we observe a gradual convergence to a smooth DOS curve in the limit of large supercells. 
    The converged DOS curve appears to be insulating at $m=3$, see Sec.~\ref{sec:larger-sc}.}
    \label{fig:comp} 
\end{figure*}

\section{PBC clusters}\label{app:clusters}
PBC clusters are finite-sized hyperbolic lattices with periodic boundary conditions (PBC). Owing to the spatial curvature, applying PBC to a finite $\{p,q\}$
hyperbolic lattice is a nontrivial task relying on geometric group theory~\cite{sausset2007,Maciejko2022,Lux:2022,Lux2023}. In this section, we describe our method for constructing PBC clusters, which applies to $\{p,q\}$ lattices with known hyperbolic Bravais lattices and corresponding unit cells \cite{Boettcher:2022,Chen2023a,Chen2023b}. 
The hyperbolic translation group $\HTG$ of a hyperbolic Bravais lattice is generated by elements $\htg{j}$ which relate pairs of edges of the primitive cell. 
A PBC cluster with $N$ primitive cells is defined by a choice of normal subgroup $\nsg\triangleleft\HTG$ of index $N$ corresponding to a set of translations under which the wave function is invariant~\cite{Maciejko2022}. 
Given such a normal subgroup, one can compute the factor group $\HTG/\nsg$, which is a finite group of order $N$ and corresponds to a group of translations modulo PBC on the finite cluster. 
The structure of the factor group allows us to construct an adjacency matrix $A$ on the cluster, in which each lattice site corresponds to a coset $[\htgel]\in \HTG/\nsg$. We use $[\htgel]$ to denote
the right coset of $\htgel\in \HTG$, i.e., $[\htgel]=\nsg \htgel=\{\tilde{\htgel} \htgel:\tilde{\htgel}\in\nsg\}$. More complicated tight-binding models---with multiple on-site degrees of freedom (e.g., orbitals or sublattices)---can then be built by tensoring the adjacency matrix $A$ with finite-dimensional matrices in orbital/sublattice space. 
PBC clusters with Abelian factor group $\HTG/\nsg$ are termed Abelian clusters, and such clusters exclude all higher-dimensional IRs of the translation group. 
To study the gapless spectrum of the hyperbolic non-Abelian semimetal, we focus on non-Abelian clusters, whose factor group is non-Abelian and allows us to discretely sample the spectrum of non-Abelian Bloch states.

In the following, we outline our methodology for constructing the tight-binding Dirac model on \textit{arbitrarily large} non-Abelian PBC clusters of the \{$8,8$\} hyperbolic (Bravais) lattice, with hyperbolic translation group $\HTG$ defined in \cref{eq:constraint}. \cref{app:PBC-NSG} presents a method based on subgroup intersection to overcome the computational limitations of Ref.~\cite{Maciejko2022}, in which only modest system sizes $N\leq 25$ had been reached. The algorithm to construct the adjacency matrix $A$ on a PBC cluster is explained in \cref{app:PBC-adj}. Finally, in \cref{app:PBC-dirac}, we construct the Hamiltonian for the hyperbolized QHI model.

\subsection{Large-index normal subgroups from intersection}
\label{app:PBC-NSG}

In Ref.~\cite{Maciejko2022}, a GAP implementation~\cite{LINS,GAP4} of the low-index normal subgroups procedure~\cite{FirthThesis} was used to enumerate all normal subgroups $\nsg$ up to a given finite index $N_\text{max}$. 
Such a procedure is computationally expensive, and Ref.~\cite{Maciejko2022} only presented results for PBC clusters of maximum size $N_\text{max}=25$.
To overcome these limitations, we propose an algorithm based on a few mathematical results. 
First, given two normal subgroups $\nsg^{(1)},\nsg^{(2)}$ of the Fuchsian translation group $\HTG$, we show that their intersection $\nsg^{(1)}\cap\nsg^{(2)}$ is also normal in $\HTG$ and thus defines a valid PBC cluster. 
In the following, we denote set inclusion by $\subseteq$, subgroup inclusion by $\leq$, and normal subgroup inclusion by $\triangleleft$.
\begin{lemma}\label{lemma1}
Let $H,K\triangleleft G$. Then $H\cap K\triangleleft G$.
\end{lemma}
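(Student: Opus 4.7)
The statement is a standard fact from elementary group theory, so the plan is to give a short, direct verification rather than invoke heavy machinery. I would split the argument into two parts: (i) confirm that $H\cap K$ is a subgroup of $G$, and (ii) confirm that it is closed under conjugation by arbitrary elements of $G$.

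For the subgroup part, I would note that both $H$ and $K$ contain the identity $e$, so $e\in H\cap K$ and the intersection is nonempty. For closure, if $x,y\in H\cap K$, then $xy\in H$ (since $H$ is a subgroup) and $xy\in K$ (since $K$ is a subgroup), so $xy\in H\cap K$. Analogously, $x^{-1}\in H$ and $x^{-1}\in K$, so $x^{-1}\in H\cap K$. Hence $H\cap K\leq G$. This step is entirely routine and uses only the subgroup property, not normality.

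For normality, I would fix arbitrary $g\in G$ and $x\in H\cap K$ and show $gxg^{-1}\in H\cap K$. Since $x\in H$ and $H\triangleleft G$, one has $gxg^{-1}\in H$. Since $x\in K$ and $K\triangleleft G$, one has $gxg^{-1}\in K$. Combining, $gxg^{-1}\in H\cap K$, which is the defining property of a normal subgroup.

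There is no real obstacle here; the only thing worth being a little careful about is keeping the two membership conditions ($H$ and $K$) clearly separated when invoking normality, since each is used independently. No induction, no coset analysis, and no appeal to the lattice of subgroups is needed. The proof is essentially two lines once the definitions are unpacked, and it generalizes immediately to arbitrary (possibly infinite) intersections of normal subgroups — a remark worth including in one sentence, because later parts of the appendix build PBC clusters by intersecting pairs of normal subgroups of the Fuchsian group $\HTG$ to generate larger-index normal subgroups.
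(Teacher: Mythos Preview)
Your proof is correct. The subgroup step matches the paper's almost verbatim, but for normality you take a more direct route than the paper does: you verify the conjugation criterion $gxg^{-1}\in H\cap K$ elementwise, whereas the paper first establishes the coset identity $g(H\cap K)=gH\cap gK$ (and its right-sided analog), and then combines $gH=Hg$ and $gK=Kg$ to conclude $g(H\cap K)=(H\cap K)g$. Both arguments are elementary and equivalent in strength; yours is shorter and avoids the auxiliary set-equality lemma, while the paper's version makes the interaction with left/right cosets explicit, which is natural given that the surrounding appendix works with coset decompositions of $\HTG$ to build PBC clusters. Your closing remark about arbitrary intersections is apt and aligns with how Lemma~\ref{lemma1} is used later to justify $\nsg^\cap=\bigcap_p\nsg^{(p)}\triangleleft\HTG$.
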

\begin{proof}
First, we prove that $H\cap K\leq G$. If $g,h\in H\cap K$, they are also in $H$. 
Since $H$ is a group, then $gh\in H$. 
On the other hand, $g,h$ are also in $K$, and since $K$ is also a group, we have $gh\in K$. 
Since $gh$ is both in $H$ and in $K$, we have $gh\in H\cap K$. 
But this holds for every pair of elements $g,h$ in $H\cap K$, thus $H\cap K$ is a group. Since $H\cap K$ is contained in $G$ as a set, it is a subgroup of $G$.

Next, we prove that $H\cap K$ is normal in $G$. Consider an element $b$ of the set $g(H\cap K)$. Since $H$ contains $H\cap K$, it is also true that $b\in gH$, and since $K$ also contains $H\cap K$, it is also true that $b\in gK$. Therefore, $b$ must belong to the intersection $gH\cap gK$. Since this is true for any $b\in g(H\cap K)$, we have $g(H\cap K)\subseteq gH\cap gK$. However, we can also show that the converse is true: $gH\cap gK\subseteq g(H\cap K)$. Let $b\in gH\cap gK$. Then, $b\in gH$ and $b\in gK$, and there exist $h\in H$ and $k\in K$ such that $b=gh=gk$. Multiplying by $g^{-1}$, we find that $h=k\in H\cap K$. Thus, $b\in g(H\cap K)$, and since $b$ was otherwise arbitrary, we find that $gH\cap gK\subseteq g(H\cap K)$. Since both $g(H\cap K)\subseteq gH\cap gK$ and $gH\cap gK\subseteq g(H\cap K)$ are true, we necessarily have that $g(H\cap K)=gH\cap gK$. We leave it as an exercise to the reader to also prove using exactly the same arguments that $(H\cap K)g=Hg\cap Kg$.

We finally show that the latter two equalities imply normality of $H\cap K$ in $G$. Let $g\in G$, then $H\triangleleft G$ implies $gH=Hg$, and $K\triangleleft G$ implies $gK=Kg$. Therefore, $g(H\cap K)=gH\cap gK=Hg\cap Kg=(H\cap K)g$, thus $H\cap K\triangleleft G$.
\end{proof}
Next, we show that the index of $\nsg^{(1)}\cap\nsg^{(2)}$ is bounded from below by the least common multiple of the indices of $\nsg^{(1)}$ and $\nsg^{(2)}$, and is thus typically larger than either individual index.
\begin{lemma}
Let $H,K\triangleleft G$ with indices $|G:H|=m$ and $|G:K|=n$. Then
\begin{align}\label{eqn:index-bounds}
\lcm(m,n)\leq|G:H\cap K|\leq mn,
\end{align}
with the equality (i.e., $|G:H\cap K|=mn$) realized for $m,n$ coprime.
\end{lemma}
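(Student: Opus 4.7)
The plan is to prove the two inequalities independently and obtain the coprime equality as an immediate corollary. Neither step uses any Fuchsian-specific structure: the argument is a general fact about subgroups of finite index that then specializes to $G=\HTG$ with $H,K$ the normal subgroups defining PBC clusters.

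For the upper bound $|G:H\cap K|\le mn$, I would introduce the natural map
\[
\varphi\colon G/(H\cap K)\longrightarrow G/H\times G/K,\qquad g(H\cap K)\mapsto (gH,\,gK),
\]
and show that it is both well-defined and injective. Well-definedness follows from the inclusions $H\cap K\subseteq H$ and $H\cap K\subseteq K$: if $g(H\cap K)=g'(H\cap K)$ then $g^{-1}g'\in H\cap K$, so $gH=g'H$ and $gK=g'K$. Injectivity is the converse implication: if $gH=g'H$ and $gK=g'K$, then $g^{-1}g'$ lies in both $H$ and $K$, hence in $H\cap K$, so $g$ and $g'$ define the same coset of $H\cap K$. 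Since the codomain has cardinality $mn$, the bound follows. Normality itself is not needed in this step; it enters only via Lemma~\ref{lemma1}, which ensures that $G/(H\cap K)$ carries a group structure so that its order is unambiguously the index $|G:H\cap K|$.

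For the lower bound $\lcm(m,n)\le|G:H\cap K|$, I would invoke the tower formula for indices: whenever $A\le B\le C$ with $|C:A|$ finite, $|C:A|=|C:B|\cdot|B:A|$. Specializing to $H\cap K\le H\le G$ gives $|G:H\cap K|=m\cdot|H:H\cap K|$, so $m$ divides $|G:H\cap K|$; the same argument with $K$ in place of $H$ shows that $n$ also divides $|G:H\cap K|$. Consequently $\lcm(m,n)$ divides $|G:H\cap K|$, which in particular implies the claimed inequality. The coprime case is then free: when $\gcd(m,n)=1$ one has $\lcm(m,n)=mn$, and the lower and upper bounds collapse to $mn$, forcing $|G:H\cap K|=mn$.

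I do not anticipate any serious obstacle; this statement is entirely classical once set up correctly. The one point worth spelling out is that the tower formula for indices applies even though $G=\HTG$ is infinite, which is routine because it is really a statement about coset enumeration and only requires the individual indices $m,n$ to be finite. With that in hand, the rest of the argument is elementary and the two-sided bound (\ref{eqn:index-bounds}) together with its coprime saturation follows at once.
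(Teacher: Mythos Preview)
Your argument is correct and follows essentially the same route as the paper: the lower bound via the tower formula $|G:H\cap K|=m|H:H\cap K|=n|K:H\cap K|$ is identical, and for the upper bound the paper simply cites Robinson's 1.3.11(ii) whereas you supply the standard diagonal-embedding proof of that result. One small remark: normality is not needed even to make sense of the index $|G:H\cap K|$, since the coset space $G/(H\cap K)$ is a well-defined set regardless; so your aside about Lemma~\ref{lemma1} can be dropped.
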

\begin{proof}
First, observe that $H\cap K$ is a subgroup of both $H$ and $K$, since it is a group and is contained in both $H$ and $K$. Since $H\cap K\leq H\leq G$, we have (see 1.3.5 in Ref.~\cite{Robinson})
\begin{align}
|G:H\cap K|=|G:H||H:H\cap K|=m|H:H\cap K|.
\end{align}
Also, $H\cap K\leq K\leq G$, thus
\begin{align}
|G:H\cap K|=|G:K||K:H\cap K|=n|K:H\cap K|.
\end{align}
Hence $|G:H\cap K|$ is a common multiple of $m$ and $n$, thus $|G:H\cap K|\geq\lcm(m,n)$. For the second inequality in Eq.~(\ref{eqn:index-bounds}), 
see 1.3.11(ii) in Ref.~\cite{Robinson}; it becomes an equality for $m,n$ coprime (and $\lcm(m,n)=mn$ in that case).
\end{proof}
Finally, we show that the intersection of normal subgroups $\nsg^{(1)}$ and $\nsg^{(2)}$ corresponding to Abelian clusters can only produce another Abelian cluster. To accomplish this, we rely on the notion of commutator subgroup. The commutator subgroup of a group is generated by all commutators of the group elements. 
Roughly speaking, the commutator subgroup is a measure of non-commutativity: a large commutator subgroup indicates a `less Abelian' group.

\begin{lemma}\label{commutator}
Let $H\triangleleft G$. Then $G/H$ is Abelian if and only if $H$ contains the commutator subgroup $[G,G]=\langle [g_1,g_2]|g_1,g_2\in G\rangle$.
\end{lemma}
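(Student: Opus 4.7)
The plan is to prove the two implications separately, exploiting the definition of the quotient group operation $(g_1H)(g_2H)=(g_1g_2)H$ together with the standard identity that a commutator $[g_1,g_2]=g_1^{-1}g_2^{-1}g_1g_2$ vanishes in $G/H$ precisely when it lies in $H$. Since the kernel of the canonical projection $\pi:G\to G/H$ is $H$, the criterion reduces to comparing $\pi([g_1,g_2])$ with the identity coset.

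For the forward direction, I would assume $G/H$ is Abelian and show $[G,G]\leq H$. Pick arbitrary $g_1,g_2\in G$. Abelianness of the quotient gives $(g_1H)(g_2H)=(g_2H)(g_1H)$, i.e.\ $g_1g_2H=g_2g_1H$, which is equivalent to $g_1^{-1}g_2^{-1}g_1g_2\in H$. Thus every commutator $[g_1,g_2]$ lies in $H$. Because $H$ is a subgroup, it also contains every finite product of commutators and their inverses; hence the entire generated subgroup $[G,G]=\langle[g_1,g_2]\mid g_1,g_2\in G\rangle$ is contained in $H$.

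For the reverse direction, assume $[G,G]\leq H$. For any $g_1,g_2\in G$ the commutator $[g_1,g_2]\in H$, so $g_1^{-1}g_2^{-1}g_1g_2\in H$, which rearranges to $g_1g_2H=g_2g_1H$. Translating back to the quotient, this reads $(g_1H)(g_2H)=(g_2H)(g_1H)$ for arbitrary cosets, so $G/H$ is Abelian. No new normality argument is needed here because $H\triangleleft G$ is assumed at the outset, ensuring that $G/H$ is a well-defined group.

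The only subtlety — really the only place any care is required — is the choice of commutator convention ($g_1^{-1}g_2^{-1}g_1g_2$ versus $g_1g_2g_1^{-1}g_2^{-1}$), which is immaterial since the two generate the same subgroup. There is no substantive obstacle, as the result is a textbook consequence of the correspondence between normal subgroups containing $[G,G]$ and Abelian quotients; I would simply cite a standard reference (for instance Robinson, already cited in the excerpt) for the identification of $[G,G]$ as the smallest such normal subgroup if a more streamlined proof is desired.
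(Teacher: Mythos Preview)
Your proof is correct and follows essentially the same approach as the paper: both directions proceed by directly comparing coset products and reducing commutativity of $G/H$ to the condition $[g_1,g_2]\in H$. The only cosmetic differences are your use of left cosets and the opposite commutator convention, both of which you rightly note are immaterial.
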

\begin{proof}
Denote the (right) coset of $g_1\in G$ by $[g_1]=Hg_1$. Using the normality of $H$ in $G$, we have:
\begin{align}
[g_1][g_2]&=(Hg_1)(Hg_2)=H(g_1Hg_1^{-1})g_1g_2=Hg_1g_2.
\end{align}
Likewise, $[g_2][g_1]=Hg_2g_1$. Assume that $G/H$ is Abelian, then $[g_1][g_2]=[g_2][g_1]$, which implies $Hg_1g_2=Hg_2g_1$. Multiplying by $(g_2g_1)^{-1}=g_1^{-1}g_2^{-1}$ from the right, we obtain $H[g_1,g_2]=H$ where $[g_1,g_2]=g_1g_2g_1^{-1}g_2^{-1}$ is the commutator of $g_1$ and $g_2$. This implies that $[g_1,g_2]\in H$ for all $g_1,g_2\in G$, thus $[G,G]\subseteq H$.

Now instead, assume $[G,G]\subseteq H$. For any $g_1,g_2\in G$, we have:
\begin{align}
[g_1][g_2]&=Hg_1g_2=Hg_1g_2(g_1^{-1}g_2^{-1}g_2g_1)=H[g_1,g_2]g_2g_1\nn\\
&=Hg_2g_1=[g_2][g_1],
\end{align}
hence $G/H$ is Abelian. We have used $H[g_1,g_2]=H$ which holds since $[g_1,g_2]\in [G,G]\subseteq H$.
\end{proof}

\begin{theorem}
Let $H,K\triangleleft G$. Then $G/H$ and $G/K$ are both Abelian if and only if $G/(H\cap K)$ is Abelian.
\end{theorem}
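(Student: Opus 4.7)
The plan is to reduce this theorem immediately to the commutator-subgroup criterion that was just established. Lemma \ref{commutator} states that for any normal subgroup $N \triangleleft G$, the quotient $G/N$ is Abelian if and only if $[G,G] \subseteq N$. This bidirectional characterization is tailor-made for the present statement: I simply need to translate the condition ``$G/H$ and $G/K$ are Abelian'' into containments of the commutator subgroup, and then observe that intersections of subgroups behave well with respect to set containment. The fact that $H \cap K$ is itself normal in $G$, so that it makes sense to speak of $G/(H\cap K)$ being Abelian, has already been supplied by Lemma \ref{lemma1}.

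For the forward direction I would assume that both $G/H$ and $G/K$ are Abelian. Two applications of Lemma \ref{commutator} give $[G,G] \subseteq H$ and $[G,G] \subseteq K$, and therefore $[G,G] \subseteq H \cap K$ by the definition of intersection. Using Lemma \ref{lemma1} to ensure $H \cap K \triangleleft G$, a final application of Lemma \ref{commutator} in the reverse direction yields that $G/(H \cap K)$ is Abelian.

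For the converse direction I would assume $G/(H \cap K)$ is Abelian. Lemma \ref{commutator} then implies $[G,G] \subseteq H \cap K$. Since $H \cap K \subseteq H$ and $H \cap K \subseteq K$, this gives the two separate containments $[G,G] \subseteq H$ and $[G,G] \subseteq K$, and invoking Lemma \ref{commutator} once more for each of $H$ and $K$ shows that both $G/H$ and $G/K$ are Abelian.

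There is no real obstacle here, since all of the substantive group-theoretic work has been concentrated in the preceding three lemmas. The proof essentially amounts to the observation that Abelianness of a quotient $G/N$ is a monotone condition in $N$ (with respect to the ordering by inclusion of normal subgroups containing $[G,G]$), and that this monotone condition respects finite intersections because $X \subseteq H \cap K$ is equivalent to $X \subseteq H$ and $X \subseteq K$. The only minor point to double-check is that Lemma \ref{commutator} is applied only to subgroups that are actually normal in $G$, which is guaranteed by hypothesis for $H$ and $K$ and by Lemma \ref{lemma1} for $H \cap K$.
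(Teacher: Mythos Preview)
Your proof is correct and essentially identical to the paper's: both directions reduce to Lemma~\ref{commutator}, using that $[G,G]\subseteq H\cap K$ is equivalent to $[G,G]\subseteq H$ and $[G,G]\subseteq K$. The paper's proof is slightly terser and leaves the normality of $H\cap K$ implicit, whereas you explicitly cite Lemma~\ref{lemma1}, but the argument is the same.
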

\begin{proof}
First assume that $G/H$ and $G/K$ are Abelian. By Lemma~\ref{commutator}, $[G,G]$ is contained in both $H$ and $K$. Thus $[G,G]\subseteq H\cap K$, and by the same Lemma, $G/(H\cap K)$ is Abelian.
Now assume instead that $G/(H\cap K)$ is Abelian. By the same Lemma, $[G,G]$ is contained in $H\cap K$, and thus is contained in both $H$ and $K$. As a result, $G/H$ and $G/K$ are both Abelian.
\end{proof}

Thus, for a cluster constructed from subgroup intersection to be non-Abelian, it is a necessary and sufficient condition that at least one of the two clusters in the intersection is non-Abelian.

We now describe the computational procedure for generating
normal subgroups of $\HTG$. We begin with the list of small-index normal subgroups generated by the methods of Ref.~\cite{Maciejko2022}.
We then randomly select $n_p$ parent subgroups $\{\nsg^{(p)}\}_{p=1}^{n_p}$ among this list
with indices between $N_{\rm{min}}$ and $N_{\rm{max}}$, keeping only the ones with non-Abelian factor groups. 
We then compute in GAP the intersection 
\begin{equation}
    \nsg^{\cap}\equiv\bigcap_{p=1}^{n_p}\nsg^{(p)}
\end{equation} 
of the parent subgroups. (Using Lemma~\ref{lemma1} repeatedly, it is clear that $\nsg^{\cap}$ thus constructed is normal in $\HTG$ for any~$n_p$.)

If the factor group $\HTG/\nsg^{\cap}$
is of sufficiently large order (i.e., system size), we construct the adjacency matrix for the corresponding non-Abelian PBC cluster. To obtain normal subgroups
within some target index range, we have experimented with the input parameters
$(n_{p},N_{\rm{min}},N_{\rm{max}})$. We find that input parameters $(n_{p},N_{\rm{min}},N_{\rm{max}})=(2,15,20)$
result in $\nsg^{\cap} $ with indices $N\sim 300$, while $(n_{p},N_{\rm{min}},N_{\rm{max}})=(3,10,15)$ gives indices $N\sim1500\text{--}2000$. For our analysis, we randomly chose 13 normal subgroups $\nsg^{\cap} $ with indices 
$N=$ \{304, 304, 320, 320, 306, 1560, 1584, 1716, 1800, 1800, 1872, 1980, 1980\}, which correspond to the sizes of PBC clusters [see \cref{fig:comp}(a)]. The 1D IRs make up 1/2 of all eigenstates in the first four clusters, while the fraction is 1/3 in the other clusters.
Note that such indices are completely out of reach of brute-force applications of the low-index normal subgroups procedure, as done e.g. in Refs.~\cite{Maciejko2022,Bzdusek2022}. By contrast, the method presented here can, in principle, produce normal subgroups of arbitrarily large indices.

\subsection{Adjacency matrix}\label{app:PBC-adj}

We employ GAP~\cite{GAP4} to construct an adjacency matrix $A$ from a previously 
obtained normal subgroup $\nsg^{\cap}$ with a large index $N$. If two elements $[\htgel_{n}],$ $[\htgel_{m}]\in \HTG/\nsg^{\cap} $
correspond to nearest neighbors, they must be related by one of the generators $\gamma_{j}$ of $\HTG$~\cite{Maciejko2022}:
\begin{equation}
[\htgel_{n}]=[\htgel_{m}][\gamma_{j}].
\end{equation}
Here, we include $\{\gamma_{1},\gamma_{2},\gamma_{3},\gamma_{4}\}$ and their inverses in the generating set:
\begin{equation}
    \{\gamma_{j}\}_{j=1}^8\equiv\{\gamma_{1},\gamma_{2},\gamma_{3},\gamma_{4},\gamma_{1}^{-1},\gamma_{2}^{-1},\gamma_{3}^{-1},\gamma_{4}^{-1}\}.
\end{equation}
The algorithm proceeds through the following steps:

\begin{enumerate}
\item Compute the factor group $\HTG/\nsg^{\cap} $.
\item Compute the right cosets of generators \{$\gamma_{j}$\}$_{j=1}^8$:
\begin{enumerate}
\item Construct the homomorphism $\Phi:\HTG\rightarrow \HTG/\nsg^{\cap}$ which sends each element
$\htgel\in \HTG$ to the right coset $[\htgel]=\nsg^{\cap} \htgel$. 
\item Compute the right coset $[\gamma_{j}]=\Phi(\gamma_{j})$ for $j=1,\ldots,8$.
\end{enumerate}
\item Construct the adjacency matrix $A$:
\begin{enumerate}
\item Initialize $A$ as an $N\times N$ matrix with zero entries.
\item For each pair of elements $[\htgel_{n}],$ $[\htgel_{m}]\in \HTG/\nsg^{\cap}$, if $[\htgel_{n}]=[\htgel_{m}][\gamma_{j}]$
for some integer $j\in [1,8]$, then $[\htgel_{n}]$ and
$[\htgel_{m}]$ are nearest neighbors and we let $A_{nm}\rightarrow A_{nm}+1$. Note that it is possible to have $A_{nm}\ge 1$, implying that there are multiple generators relating $[\htgel_{n}]$ and
$[\htgel_{m}]$. This is uncommon for large clusters.
\end{enumerate}
\item Construct a matrix $B$ to record the indices of all the generators that relate
a given pair of neighbors:
\begin{enumerate}
\item Initialize $B$ as a three-dimensional $N\times N\times 8$ array with zero entries.
\item For each pair of elements $[\htgel_{n}],$ $[\htgel_{m}]\in \HTG/\nsg^{\cap}$ and for each index  $j\in [1,8]$, if $[\htgel_{n}]=[\htgel_{m}][\gamma_{j}]$,  then let $B_{n,m,j}=1$;  otherwise let $B_{n,m,j}=0$.
\end{enumerate}
\end{enumerate}
Step 4 is required for implementing additional on-site degrees
of freedom (e.g., orbitals or sublattices), as in the Dirac model (see \cref{app:PBC-dirac}).

\subsection{Hamiltonian construction}\label{app:PBC-dirac}

The Dirac model on a ${\{8,8\}}$ PBC cluster with $N$ sites is described by a $4N\times4N$
Hamiltonian matrix $H$. A 4-component spinor $\psi$ lives on each
site. Nearest-neighboring spinors are coupled by inter-site matrices
\begin{equation}
T_{j}=\frac{\GM{5}-\mathrm{i}\GM{j}}{2}\text{, for }j=1,\ldots,4,
\label{eq:inter-site-mat}
\end{equation}
and their Hermitian conjugates,
\begin{equation}
T_{j}=T_{j-4}^{\dag}=\frac{\GM{5}+\mathrm{i}\GM{j-4}}{2}\text{, for }j=5,\ldots,8,
\end{equation}
where $\GM{\mu}=(\GM\mu)^\dag$ with $\mu=1,\ldots,5$ are the Dirac matrices.
Given the adjacency matrix $A$ and the generator-label matrix $B$
of a PBC cluster, the procedure for defining the Dirac model is as
follows:
\begin{enumerate}
\item Initialize $H$ as a $4N\times4N$ matrix with zero entries.
\item For each pair of neighboring sites $n$ and $m$ such that $A_{nm}>0$, use matrix $B$ to recall the indices of the generators that relate sites $n$ and $m$. For each recalled index $j'$,
add to $H$ the tensor product of $M$, which is a $N\times N$
matrix with zeros everywhere except at $M_{nm}=1$, and $T_{j'}$:
\begin{equation}
H\rightarrow H+M\otimes T_{j'}.
\end{equation}
\item To add a nonzero mass $m$, add to $H$ the tensor product of the identity
matrix $I_{N\times N}$ and $\Gamma_{5}$:
\begin{equation}
H\rightarrow H+mI_{N\times N}\otimes\GM{5}.
\end{equation}
\end{enumerate}
The DOS for $H$ thus constructed is shown for various PBC clusters and compared with $U(1)$-HBT in Fig.~\ref{fig:comp}(a).


\subsection{Flux insertion}
\label{app:flux_PBC}

\begin{figure}[t]
\includegraphics[width=\columnwidth]{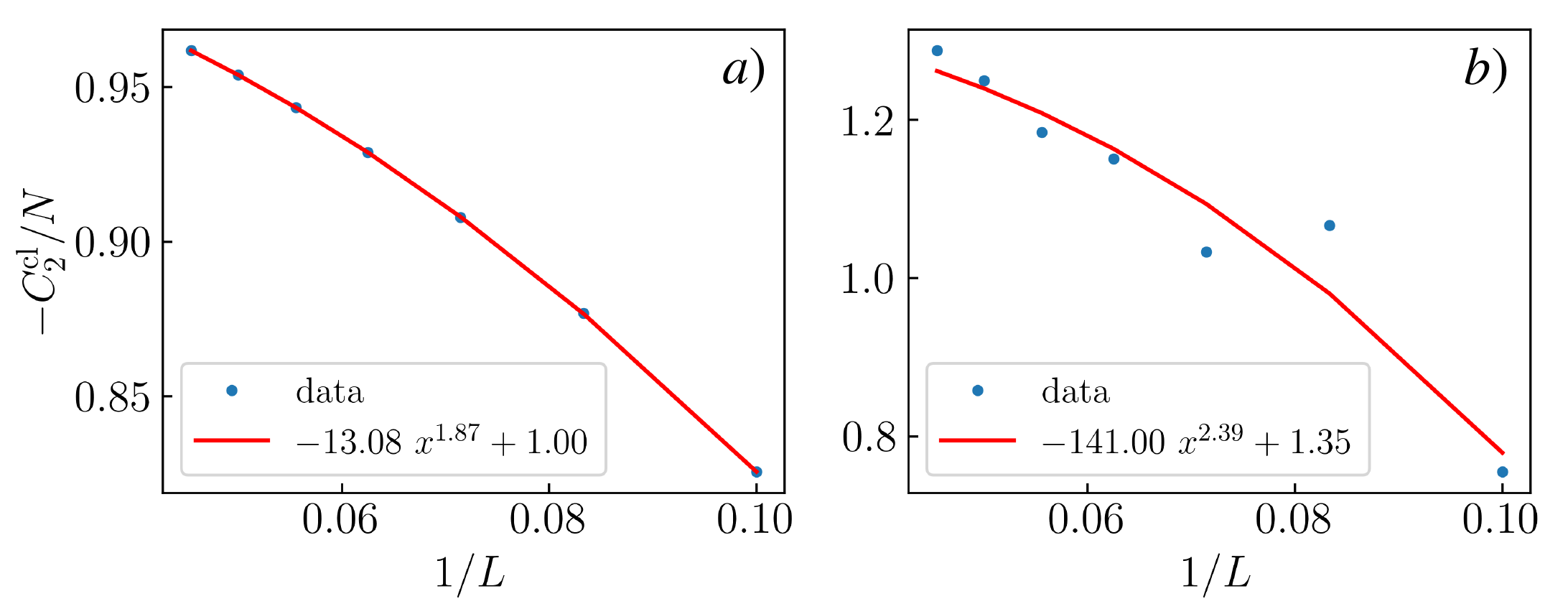}
    \caption{
    Scaling of the computed cluster-specific second Chern number $C_2^{\textrm{cl}}/N$ as a function of the linear mesh size $L$ for (a)~an Abelian and (b)~a non-Abelian cluster with $N=16$ primitive cells. 
    The considerable noise in non-Abelian clusters may be attributed to the change in $C_2$ contribution from different Bloch states that are included at larger $L$ (smaller $1/L=:x$). 
    Nonetheless, the best-fit curves suggest that, for large $L$, $-C_2^{\textrm{cl}}/N$ converges to $1.0$ and $1.35$, respectively.}
    \label{fig:C2fit} 
\end{figure}

To explore the topological response of finite systems to external fields, we focus on small PBC clusters and compute the cluster-specific second Chern number $C_2^{\textrm{cl}}$ by threading fluxes along the four translation directions. Namely, we append a phase to the inter-site transition matrices $T_j$ in \cref{eq:inter-site-mat} as $T_j \to T_j \mathrm{e}^{\mathrm{i}\phi_j}$ with $\boldsymbol{\phi} = \{\phi_j\}_{j=1}^4 \in [0,2\pi)^{4}$. We consider Abelian and non-Abelian clusters with $N=16, 18, 20$ primitive cells. For the non-Abelian clusters, the non-Abelian states constitute 1/2, 2/3, and 4/5 of the total eigenstates, respectively. 

In the flux space, the total second Chern number of a multiplet of $n_f$ filled bands (which may have mutual degeneracies) is~\cite{Qi2008}
\begin{equation}
    C_2 =
    \frac{1}{32 \pi^2} \int_{\mathbb{T}^4} d^4\phi ~ \epsilon^{jklm} ~ \text{Tr}(F_{jk}F_{lm}),
\end{equation}
where 
\begin{equation}
F_{jk} = \frac{\partial A_k}{\partial \phi_j}  - \frac{\partial A_j}{\partial \phi_k} +\mathrm{i}[A_j,A_k]
\end{equation}
is the non-Abelian Berry curvature in the plane spanned by directions $j$ and $k$, and \begin{equation}A_j^{\alpha\beta} = \langle \psi_\alpha(\boldsymbol{\phi}) \vert \frac{\partial}{\partial \phi_j} \vert \psi_\beta(\boldsymbol{\phi}) \rangle\end{equation}
is the non-Abelian Berry connection computed from eigenstates $\ket{\psi_\alpha(\boldsymbol{\phi})}$ with $\alpha=1,...,n_f$. Here $\epsilon^{jklm}$ is the Levi-Civita symbol in 4D with an implicit sum over repeated indices and the integration is over the 4-torus $\mathbb{T}^4$ defined by $\boldsymbol{\phi}$. 
We follow Ref.~\cite{MocholGrzelak2018} for  the numerical computation of $C_2$. 

Let us remark that while for the computation of first Chern number $C_1$ on a 2D manifold there exists an elegant algorithm of Ref.~\cite{Fukui:2005} that guarantees the correct quantized result for a rather coarse choice of a momentum-space grid (namely, the only requirement to find the correct integer is that the Berry phase on each square of the grid is less than $\pi$), the algorithm of Ref.~\cite{MocholGrzelak2018} to compute second Chern number $C_2$ exhibits deviations from the correct results for any finite momentum-space grid. 
Therefore, we study the dependence of the numerically computed $C_2^{\rm cl}$ on the grid size $L$ in \cref{fig:C2fit} to find the result extrapolated at $L\to\infty$. 
Our results show that while the Abelian clusters have $C_2^{\rm cl}/N = C_2$ in agreement with the band theory, positioning of the Bloch states with respect to the nodal manifold influences $C_2^{\rm cl}$ in non-Abelian clusters. Our implementation of the numerical algorithm is available in the Supplementary Data and Code~\cite{Tummuru:2023:SDC}. 


\section{U\texorpdfstring{$(d)$}{d}-HBT} \label{app:sud} 

Here we provide details about the algorithm for sampling BZ$^{(1,d)}$. We focus on the simplest class of non-Abelian Bloch states, i.e., $(d=2)$-dimensional irreducible representations, where the matrices $U_j$, $j=1,\ldots,4$ belong to the $\textrm{SU}(2)$ group. 
In Fig.~\ref{fig:comp}(b), the data for $\textrm{SU}(3)$ are obtained by nearly identical steps, with the only difference being the matrix dimension and parameterization. 

In general, an $\textrm{SU}(2)$ matrix may be parameterized as
\begin{equation}\label{eqn:SU2-decomp}
    U_j = 
    \begin{pmatrix}
        a_j & b_j \\
        -b_j^\ast & a_j^\ast
    \end{pmatrix},
\end{equation}
where $a_j,b_j \in \mathbb{C}$ and $|a_j|^2+|b_j|^2=1$. The latter condition reduces the number of free real parameters to three. 
To be a valid representation of the Fuchsian group $\Gamma$, one has to ensure that the choice of four matrices $\bU$ obey 
\begin{equation}
    U_1^{\phantom{1}} U_2^{-1}U_3^{\phantom{1}} U_4^{-1} U_1^{-1} U_2^{\phantom{1}} U_3^{-1} U_4^{\phantom{1}} - \mathbbm{1}_2 = 0.
    \label{eq:cons2}
\end{equation}

The procedure to find solutions $\{U_j\}_{j=1}^4 \equiv \bU$ is as follows. We pick random $U_{1,2}$ according to the circular unitary ensemble, which represents a uniform distribution over the unitary $d\times d$ matrices, and corresponds to the Haar measure on the unitary group. 
Subsequently, we choose a random initial choice of $U_{3,4}$ according to the same uniform distribution, which we decompose into $a_{3,4}$ and $b_{3,4}$ per \cref{eqn:SU2-decomp}, and we perform a gradient descent with respect to these parameters to minimize the Frobenius norm of \cref{eq:cons2}. 
Sometimes, this procedure gets trapped in a local minimum that does not solve Eq.~(\ref{eq:cons2}); in that case the matrices are discarded and we repeat the steps for a different choice of fixed $U_{1,2}$ and initial $U_{3,4}$.
The algorithm is iterated until we reach a specified number of solutions $\bU$.

To motivate why our described approach for finding random representations $\bU$ works, let us discuss the dimensions of the mathematical spaces at play. 
First, the space of \emph{distinct} $d$-dimensional (unitary) representations $D_\lambda$ of $\Gamma$, i.e., where equivalent representations are treated as a single point, is called the $\textrm{U}(d)$-character variety $X(\Sigma_2,\textrm{U}(d))$; here, $\Sigma_2$ is the genus-$2$ surface obtained from compactifying the edges of the primitive cell of the $\{8,8\}$ lattice. 
This variety is known to be ten-dimensional~\cite{Maciejko2022,nagy2022}. 
One can further decompose the representation matrices on the group generators as $D_\lambda(\gamma_j) = U_j \mathrm{e}^{\mathrm{i}k_j}$, where $\mathrm{e}^{\mathrm{i}k_j}\in\textrm{U}(1)$ and $U_j\in\textrm{SU}(2)$. 
The U(1) factors clearly absorb four of the ten dimensions of $X(\Sigma_2,\textrm{U}(2))$, implying that the space of non-equivalent $\textrm{SU}(2)$ representation matrices that obey Eq.~(\ref{eq:cons2}) is six-dimensional.

Crucially, in our numerical search for random $\textrm{SU}(2)$ matrices obeying Eq.~(\ref{eq:cons2}) we also need to account for distinct choices of representation matrices $\bU$ that fall into the same equivalence class.
Recall that given a unitary matrix $M$, the choice $M \bU M^{\dagger}$ constitutes a representation of $\Gamma$ that is \emph{equivalent} to $\bU$ as it merely corresponds to a unitary rotation of the eigenvectors of the Hamiltonian $H_{\lambda}^{(1,d)}$ that span the representation. 
However, our numerical search is not restricted from potentially finding equivalent representations. 
Since the $\textrm{SU}(2)$ group is three-dimensional, the orbit $\cup_{M\in\mathrm{SU}(2)}M\bU M^\dagger$ of the set $\{U_j\}_{j=1}^4$ is generically three-dimensional; consequently, such unitary 
transformations increase the extent of $\textrm{SU}(2)$ solutions $\bU$ to Eq.~(\ref{eq:cons2}) to a nine-dimensional manifold.
By selecting a randomly chosen $U_{1,2}$, we fix six of the nine parameters; in other words, the space of solutions for $U_{3,4}$ with given $U_{1,2}$ (if such solutions exist) is generically three-dimensional. 
We verified that by performing minimization of the Frobenius norm of Eq.~(\ref{eq:cons2}) for different initial choices of $U_{3,4}$ and fixed $U_{1,2}$ we obtain different optimized solutions for $U_{3,4}$. 
Note that finding solutions for $U_4$ given randomly chosen $U_{1,2,3}$, which corresponds to fixing nine coordinates, generically does \emph{not} yield any result. 
This is because the orbit of a single matrix $\cup_{M\in\mathrm{SU}(2)}M U_j M^\dagger$ can be shown to be only two-dimensional. As a consequence, fixing three random matrices $U_{1,2,3}$ already overdetermines the problem of finding $U_4$.

Before concluding, let us make three final remarks.
First, to make sure that we investigate all \emph{dimensions} of $\textrm{BZ}^{(1,d)}$ uniformly, we compute the Hamiltonian spectrum for $L^6\times L^4 = L^{10}$ unitary representations; here, $L^6$ points correspond to the random choice of $\textrm{SU}(2)$ matrices $\bU$ inside the six-dimensional space of distinct representations, and $L^4$ corresponds to choices of $\textrm{U}(1)$ factors $\{\mathrm{e}^{\mathrm{i}k_j}\}_{j=1}^4$ inside the four-dimensional space of momenta. 
For the presented data, we chose $L=8$.
Second, owing to the lack of explicit parametrizations of these higher BZs, it is presently unknown to us whether they are equipped with a canonical measure of volume. 
Therefore, it is possible that our random sampling is non-homogeneous, i.e., it explores some sectors more densely than others. 
Despite these subtleties, the good agreement of $\textrm{U}(2)$-HBT and 2-supercell DOS lends support to the outlined scheme. 
Finally, when generalizing the algorithm to $\textrm{SU}(3)$ matrices, the decomposition in Eq.~(\ref{eqn:SU2-decomp}) has to be replaced by an eight-parameter decomposition, as shown in Ref.~\cite{Bronzan1988}. A comparison of DOS computed from randomly sampled 2D and 3D representations is shown in \cref{fig:comp}(b).

We note in passing that while $\textrm{U}(d)$-HBT on the primitive cell and $\textrm{U}(1)$-HBT on $n$-supercells provide complementary ways to access the non-Abelian states, an investigation of $\textrm{U}(d)$ Bloch theory on $n$-supercells could conceivably open up a larger share of the hyperbolic reciprocal space.
We did not pursue such a generalization in this work.


\section{Supercell method}\label{app:supercell}

\subsection{Hyperbolic translation groups}
As described in the main text, the supercell method~\cite{Lenggenhager2023} applies $\textrm{U}(1)$-HBT to supercells in order to gain access to the non-Abelian BZs.
By considering sequences of supercells with increasing number $n$ of primitive cells, an increasing number of non-Abelian Bloch states is generated and convergence is achieved for $n\to\infty$.
Supercells can be defined~\cite{Lenggenhager2023} in terms of factor groups $\HPG$ of the underlying space group (more precisely its proper subgroup) with a translation group $\HTG$.
The proper subgroup of the space group of the $\{8,8\}$ lattice is the proper triangle group with presentation
\begin{equation}
    \TG^+(2,8,8) = \gpres{x,y,z}{x^2,y^8,z^8,xyz},
\end{equation}
generated by rotations $x$, $y$, and $z$ around the three vertices of a hyperbolic triangle by angles $\tfrac{2\pi}{2}$, $\tfrac{2\pi}{8}$, and $\tfrac{2\pi}{8}$, respectively, which satisfy the constraints that the products listed after the vertical line are equal to the identity (\cref{fig:2-supercell}).

\begin{figure}[t]
    \centering
    \includegraphics[width=0.7\linewidth]{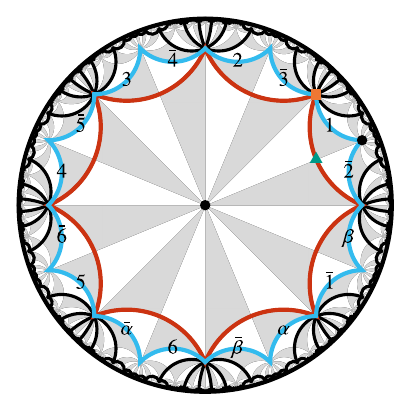}
    \caption{
        Symmetries and unit cells of the $\lee$ lattice.
        The space group symmetries of the lattice are illustrated by grey and white triangles which are generated by applying those symmetry operations to a single one of them.
        Both the symmetric primitive cell (red polygon) as well as the symmetric $2$-supercell (blue polygon), consisting of two copies of the primitive cell, are shown.
        The boundary identification of the supercell is indicated as follows: the edge labeled $\bar{j}$ is related to the edge labeled $j$ by the translation generator $\htg{j}\in\HTG^{(2)}$.
        Edges related by composite translations are labeled by $\bar{\alpha}$, $\alpha=\htg{3}^{-1}\htg{1}^{-1}$, and $\bar{\beta}$, $\beta=\htg{4}^{-1}\htg{2}^{-1}$.
        For the purpose of Appendix~\ref{app:symmetry-analysis}, we indicate the following high-symmetry positions. 
        Black dots mark inversion centers with respect to sites of the $\{8,8\}$ lattice, green triangle with respect to mid-edge, and orange square with respect to faces of the lattice.
    }
    \label{fig:2-supercell}
\end{figure}

To achieve convergence to the thermodynamic limit~\cite{Lux:2022,Mosseri2023}, the sequence of translation groups $\HTG^{(\ell)}\nsubg\Delta^+$ with $\ell\in\{1,2,3,\ldots\}$ needs to obey the normal subgroup relations
\begin{equation}
    \label{eqn:sequence-of-gammas}\HTGuc\gbusn\HTGsc\gbusn\dotsb\gbusn\HTG^{(\ell)}\gbusn\dotsb,
\end{equation}
such that $\bigcap_{\ell\geq 1}\HTG^{(\ell)}$ is the trivial group containing only the identity element~\cite{Lux:2022}.
Note that in the main text we used a slightly different notation, where the superscript `$^{(n)}$' of a translation subgroup indicates its subgroup index \begin{equation}
n=|\HTG:\HTG^{(\ell)}|,
\end{equation} 
whereas here the superscript `$^{(\ell)}$' denotes the position of the subgroup in the coherent sequence (\ref{eqn:sequence-of-gammas}).
This distinction is of little importance in the main text, since we take the second element ($\ell = 2$) in Eq.~(\ref{eqn:sequence-of-gammas}) to be a 2-supercell ($n=2$).

The $n$-supercell is compactified into a surface of genus $\genus{}^{(n)}$ that grows linearly with $n$, as expressed by the Riemann-Hurwitz formula~\cite{Miranda:1995}.
Specifically, if $\genus_{\rm pc}$ is the genus of the compactified primitive cell, then $\genus{}^{(n)} = n(\genus_{\rm pc}-1)+1$.
Here, we adopt the same sequence that was considered in Ref.~\cite{Lenggenhager2023} for the nearest-neighbor hopping model on the $\{8,8\}$ lattice.
This sequence was obtained from the factor groups given in Ref.~\cite{Conder2007} using the \textsc{HyperBloch} package~\cite{HyperBloch} and they are denoted by $\tgquot{2}{6}$, $\tgquot{3}{11}$, $\tgquot{5}{13}$, $\tgquot{9}{20}$, $\tgquot{17}{29}$, $\tgquot{33}{44}$, 
where ``$T\genus{}.j$'' labels the $j^\mathrm{th}$ quotient of any triangle group $\TG^+(r,q,p)$ where the quotient acts on a surface of genus $\genus{}$.
Their explicit presentations are 
\begin{equation}\label{eqn:supercell-PGs}
    \begin{split}
        \HPG^{(1)} &= \gpres{x,y,z}{x^2,y^8,z^8,xyz,xzy,y^3z^{-1}},\\
        \HPG^{(2)} &= \gpres{x,y,z}{x^2,y^8,z^8,xyz,xzy},\\
        \HPG^{(3)} &= \gpres{x,y,z}{x^2,y^8,z^8,xyz,xy^{-2}z^{-1}y,xzy^{-1}z^{-2}},\\
        \HPG^{(4)} &= \gpres{x,y,z}{x^2,y^8,z^8,xyz,xy^{-2}z^{-1}y},\\
        \HPG^{(5)} &= \gpres{x,y,z}{x^2,y^8,z^8,xyz,xzy^{-2}z^{-2}y,(yz^{-1}y^2)^2},\\
        \HPG^{(6)} &= \gpres{x,y,z}{x^2,y^8,z^8,xyz,xzy^{-2}z^{-2}y}.
    \end{split}
\end{equation}
The $2$-supercell derived from $\HPG^{(2)}$ including boundary identifications is illustrated in \cref{fig:2-supercell}.
For each $n$-supercell, we construct the $\textrm{U}(1)$ Bloch Hamiltonian $H^{(n,1)}_{\bk}$, parameterized by the momentum vector $\bk\in\mathrm{BZ}^{(n,1)}\cong\mathsf{T}^{2\genus{}^{(n)}}$.

\begin{widetext}

\subsection{Hyperbolic Brillouin zones}

Because of the subgroup relations in \cref{eqn:sequence-of-gammas}, one can always express supercell translation generators in terms of  $\{\gamma_{1},\gamma_{2},\gamma_{3},\gamma_{4}\}$ and their inverses.
For the $2$-supercell, specifically, we find
\begin{alignat}{2}
& \tilde{\gamma}_1 = \gamma^{}_{4} \gamma^{}_{1}, ~ 
&& \tilde{\gamma}_2 = \gamma^{-1}_{1} \gamma^{}_{2}, \nonumber \\
& \tilde{\gamma}_3 = \gamma^{-1}_{2} \gamma^{}_{3}, ~
&& \tilde{\gamma}_4 = \gamma^{-1}_{3} \gamma^{}_{4}, \nonumber \\
& \tilde{\gamma}_5 = \gamma^{-1}_{4} \gamma^{-1}_{1}, ~
&& \tilde{\gamma}_6 = \gamma^{}_{1} \gamma^{-1}_{2}.
\end{alignat}
These relations among the translation generators imply the following immersion of the original 4D BZ$^{(1,1)}$ inside the 6D BZ$^{(2,1)}$:
    \begin{eqnarray}
    \iota:  \qquad \quad \textrm{  BZ}^{(1,1)} &\to& \textrm{BZ}^{(2,1)} \nonumber \\
    (k_1,k_2,k_3,k_4) &\mapsto& (k_4+k_1,k_2-k_1,k_3-k_2,k_4-k_3,-k_4-k_1,k_1-k_2) \label{eqn:immerse-4D-in-6D}
    \end{eqnarray}
Note that the immersion $\iota$ is not one-to-one (i.e., it is non-injective): two four-momenta that differ by $\pi$ in all components $\{k_j\}_{j=1}^4$ are mapped onto the same six-momentum in BZ$^{(2,1)}$. This two-to-one mapping (associated with doubling of the number of bands) is analogous to the Brillouin-zone folding that accompanies supercell constructions in Euclidean lattices. 
We also remark that the immersion $\iota\big(\textrm{BZ}^{(1,1)}\big)$ passes through the point $(0,0,0,0,0,0)$ of BZ$^{(2,1)}$, which corresponds to the center of the nodal-line ring derived in Sec.~\ref{sec:k.p}. For this reason, the second Chern number computed inside BZ$^{(1,1)}$ relates to the second Chern number associated with the nodal ring inside BZ$^{(2,1)}$.

By randomly sampling $\mathrm{BZ}^{(n,1)}$ with $10^9$ $\bk$-points and diagonalizing $H^{(n,1)}_{\bk}$ for each of them, we compute the energy spectrum.
The DOS is then obtained with an energy resolution of $0.005$ [\cref{fig:comp}(c)].
The appearance of peaks in the DOS may be understood as a consequence of ``band folding'', which increases the number of energy bands. 
We observe convergence with increasing supercell size $n$.
Crucially, the DOS near zero energy converges rapidly, such that already the $2$-supercell ($\tgquot{3}{11}$) gives a good approximation.
Therefore, we next study the corresponding Bloch Hamiltonian $H^{(2,1)}_{\bk}$ on the 6D $\mathrm{BZ}^{(2,1)}$.

Before narrowing our attention to the $2$-supercell Bloch Hamiltonian in Sec.~\ref{sec:k.p}, let us briefly comment on the existence of a general subset relation \begin{equation}
    \textrm{BZ}^{(n_\ell,d)}\subset \textrm{BZ}^{(n_{\ell+1},d)}
\end{equation}
which generalizes the particular immersion in Eq.~(\ref{eqn:immerse-4D-in-6D}). 
This sought relation follows directly from the subgroup relation $\Gamma^{(\ell)}<\Gamma^{(\ell+1)}$ between the translation groups. 
Namely, if $\lambda$ is a $d$-dimensional representation of $\Gamma^{(\ell)}$ that for $g\in\Gamma^{(\ell)}$ assigns $g\mapsto \lambda(g)$, then a restriction of this mapping to $g\in \Gamma^{(\ell+1)}$ gives a \emph{subduced representation} of $\Gamma^{(\ell+1)}$~\cite{Lenggenhager2023}. 
This subduction defines the immersion (and therefore a subset relation) among the spaces $\textrm{BZ}^{(n_\ell,d)}$ and $\textrm{BZ}^{(n_{\ell+1},d)}$ of $d$-dimensional representations. 
Note that this immersion is not injective, because different representions $\lambda_{1} \neq \lambda_{2}$ of $\Gamma^{(\ell)}$ may subduce the same representation $\lambda$ when restricted to elements of the subgroup $\Gamma^{(\ell+1)}$. For example, we already showed that the immersion in Eq.~(\ref{eqn:immerse-4D-in-6D}) is two-to-one.


\subsection{$2$-supercell approximation}

\subsubsection{Abelian Bloch Hamiltonian for $2$-supercells}
\label{sec:k.p}

The Bloch Hamiltonian corresponding to the 2-supercell can be expressed as a linear combination of $8 \times 8$ matrices as
\begin{equation}
    H^{\rm (2,1)}_{\bk} = \frac{1}{4} \sum_{ij\ell} d_{ij\ell} \sigma_{ij\ell},
    \label{eq:bloch_h}
\end{equation}
where $\sigma_{ij\ell} = \sigma_i \otimes \sigma_j \otimes \sigma_\ell$, with $\{\sigma_i\}_{i=0}^3$ denoting the Pauli matrices and the $2\times 2$ identity matrix. The non-zero summands in Eq.~(\ref{eq:bloch_h}) are
\begin{subequations}
\label{eq:bloch_h-detail}
\begin{align}
d_{0, 3, 3} &= 4 m, \\ d_{1, 1, 0} &= -\mathrm{i} \mathrm{e}^{-\mathrm{i} \left(k_2 + k_3 + k_4 + k_5\right)} \left(-1 + \mathrm{e}^{2\mathrm{i} \left(k_2 + k_3 + k_4 + k_5\right)}\right), \\ d_{1, 2, 0} &= -\mathrm{i} \mathrm{e}^{-\mathrm{i} \left(k_2 + k_3 + k_4 + k_5 + k_6\right)} \left(-1 + \mathrm{e}^{\mathrm{i} \left(k_3 + k_4 + k_5 + k_6\right)}\right) \left(1 + \mathrm{e}^{\mathrm{i} \left(2 k_2 + k_3 + k_4 + k_5 + k_6\right)}\right), \\ d_{1, 3, 1} &= \mathrm{i} \mathrm{e}^{-\mathrm{i} \left(k_1 + k_2 + k_3 + k_4 + k_6\right)} \left(\mathrm{e}^{\mathrm{i} \left(k_1 + k_3\right)} - \mathrm{e}^{\mathrm{i} \left(k_4 + k_6\right)}\right) \left(\mathrm{e}^{\mathrm{i} k_1} + \mathrm{e}^{\mathrm{i} \left(2 k_2 + k_3 + k_4 + k_6\right)}\right), \\ d_{1, 3, 2} &= \mathrm{i} \mathrm{e}^{-\mathrm{i} \left(k_1 + k_2 + k_3 + k_4\right)} \left(\mathrm{e}^{\mathrm{i} k_1} + \mathrm{e}^{\mathrm{i} \left(k_2 + k_3 + k_4\right)}\right) \left(-1 + \mathrm{e}^{\mathrm{i} \left(k_1 + k_2 + k_3 + k_4\right)}\right), \\
d_{1, 3, 3} &= \mathrm{e}^{-\mathrm{i} \left(k_1 + k_2 + k_3 + k_4 + k_5 + k_6\right)} \Big(\mathrm{e}^{\mathrm{i} k_1} + \mathrm{e}^{\mathrm{i} \left(2 k_1 + k_3 + k_5\right)} + \mathrm{e}^{\mathrm{i} \left(k_1 + k_6\right)} + \mathrm{e}^{\mathrm{i} \left(k_1 + k_5 + k_6\right)} + \mathrm{e}^{\mathrm{i} \left(k_1 + k_4 + k_5 + k_6\right)} \nonumber \\ & + \mathrm{e}^{\mathrm{i} \left(k_1 + k_3 + k_4 + k_5 + k_6\right)} + \mathrm{e}^{\mathrm{i} \left(k_2 + k_3 + k_4 + k_5 + k_6\right)} + 2 \mathrm{e}^{\mathrm{i} \left(k_1 + k_2 + k_3 + k_4 + k_5 + k_6\right)} + \mathrm{e}^{\mathrm{i} \left(2 k_1 + k_2 + k_3 + k_4 + k_5 + k_6\right)} \nonumber \\ & + \mathrm{e}^{\mathrm{i} \left(k_1 + 2 k_2 + k_3 + k_4 + k_5 + k_6\right)} + \mathrm{e}^{\mathrm{i} \left(k_1 + 2 k_2 + 2 k_3 + k_4 + k_5 + k_6\right)} + \mathrm{e}^{\mathrm{i} \left(k_1 + 2 k_2 + 2 k_3 + 2 k_4 + k_5 + k_6\right)} \nonumber \\ & + \mathrm{e}^{\mathrm{i} \left(k_1 + 2 k_2 + 2 k_3 + 2 k_4 + 2 k_5 + k_6\right)} + \mathrm{e}^{\mathrm{i} \left(2 k_2 + k_3 + 2 k_4 + k_5 + 2 k_6\right)} + \mathrm{e}^{\mathrm{i} \left(k_1 + 2 \left(k_2 + k_3 + k_4 + k_5 + k_6\right)\right)}\Big), \\
d_{2, 1, 0} &= -\mathrm{e}^{-\mathrm{i} \left(k_2 + k_3 + k_4 + k_5\right)} \left(-1 + \mathrm{e}^{\mathrm{i} \left(k_2 + k_3 + k_4 + k_5\right)}\right){} ^ 2, \\ d_{2, 2, 0} &= -\mathrm{e}^{-\mathrm{i} \left(k_2 + k_3 + k_4 + k_5 + k_6\right)} \left(-1 + \mathrm{e}^{\mathrm{i} \left(k_3 + k_4 + k_5 + k_6\right)}\right) \left(-1 + \mathrm{e}^{\mathrm{i} \left(2 k_2 + k_3 + k_4 + k_5 + k_6\right)}\right), \\ d_{2, 3, 1} &= \mathrm{e}^{-\mathrm{i} \left(k_1 + k_2 + k_3 + k_4 + k_6\right)} \left(\mathrm{e}^{\mathrm{i} \left(k_1 + k_3\right)} - \mathrm{e}^{\mathrm{i} \left(k_4 + k_6\right)}\right) \left(\mathrm{e}^{\mathrm{i} \left(2 k_2 + k_3 + k_4 + k_6\right)} - \mathrm{e}^{\mathrm{i} k_1}\right), \\ d_{2, 3, 2} &= \mathrm{e}^{-\mathrm{i} \left(k_1 + k_2 + k_3 + k_4\right)} \left(\mathrm{e}^{\mathrm{i} \left(k_2 + k_3 + k_4\right)} - \mathrm{e}^{\mathrm{i} k_1}\right) \left(-1 + \mathrm{e}^{\mathrm{i} \left(k_1 + k_2 + k_3 + k_4\right)}\right), \\
d_{2, 3, 3} &= \mathrm{i} \mathrm{e}^{-\mathrm{i} \left(k_1 + k_2 + k_3 + k_4 + k_5 + k_6\right)} \Big(\mathrm{e}^{\mathrm{i} k_1} + \mathrm{e}^{\mathrm{i} \left(2 k_1 + k_3 + k_5\right)} + \mathrm{e}^{\mathrm{i} \left(k_1 + k_6\right)} + \mathrm{e}^{\mathrm{i} \left(k_1 + k_5 + k_6\right)} + \mathrm{e}^{\mathrm{i} \left(k_1 + k_4 + k_5 + k_6\right)}\nonumber \\ & +\mathrm{e}^{\mathrm{i} \left(k_1 + k_3 + k_4 + k_5 + k_6\right)} - \mathrm{e}^{\mathrm{i} \left(k_2 + k_3 + k_4 + k_5 + k_6\right)} + \mathrm{e}^{\mathrm{i} \left(2 k_1 + k_2 + k_3 + k_4 + k_5 + k_6\right)} - \mathrm{e}^{\mathrm{i} \left(k_1 + 2 k_2 + k_3 + k_4 + k_5 + k_6\right)} \nonumber \\ & - \mathrm{e}^{\mathrm{i} \left(k_1 + 2 k_2 + 2 k_3 + k_4 + k_5 + k_6\right)} - \mathrm{e}^{\mathrm{i} \left(k_1 + 2 k_2 + 2 k_3 + 2 k_4 + k_5 + k_6\right)} - \mathrm{e}^{\mathrm{i} \left(k_1 + 2 k_2 + 2 k_3 + 2 k_4 + 2 k_5 + k_6\right)} \nonumber \\ & - \mathrm{e}^{\mathrm{i} \left(2 k_2 + k_3 + 2 k_4 + k_5 + 2 k_6\right)} - \mathrm{e}^{\mathrm{i} \left(k_1 + 2 \left(k_2 + k_3 + k_4 + k_5 + k_6\right)\right)}\Big).
\end{align}
\end{subequations}
\end{widetext}
As we shall see in the next section, symmetries constrain some of the nodes to lie in the $(k_a,k_b,-k_a,-k_b,k_a,k_b)$ plane for $k_{a,b}\in[-\pi,+\pi)$. 
To obtain the low-energy spectrum in this plane and to simplify the expressions, we proceed in three~steps:

\begin{figure*}[t]
\includegraphics[width=\textwidth]{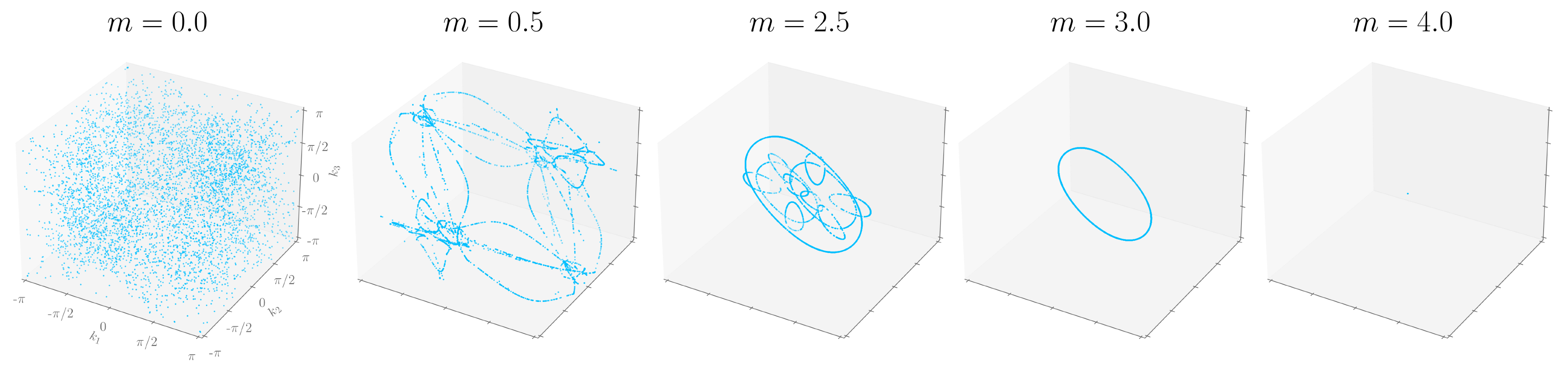}
    \caption{
    Evolution of the nodal manifold within the 2-supercell description as a function of mass $m$. 
    Each point denotes a projected six-momentum coordinate where the spectral gap at half-filling vanishes. 
    The nodal ring emerges close to $m=3$, shrinks to a point at the origin when $m=4$, and finally vanishes as the system enters the trivial insulating phase.
    The behavior is mirrored for negative values of the mass.
    }
    \label{fig:evol} 
\end{figure*}

1.~A basis transformation on the Bloch Hamiltonian makes the structure of the ring manifest in the small momentum expansion: $\tilde{H}_{\bk}^{(2,1)} = V_{H} H_{\bk}^{(2,1)} V_{H}^{\dagger}$, with $V_{H} = {\rm diag}(1,1,1,1,r,r,r,r)$ and $r=\mathrm{e}^{\mathrm{i} (k_1-k_2)/2}$.

2.~For a direct handle on the nodal plane, we perform a basis rotation to obtain the Hamiltonian $\tilde{H}^{\rm (2,1)}_{\bq}$ where $\bq = V_{\bk} \bk$ and 
\begin{equation}
V_{\bk} = \frac{1}{\sqrt{6}}
\left(
\begin{array}{cccccc}
 \sqrt{2} & 0 & \sqrt{3} & 0 & -1 & 0 \\
 0 & \sqrt{2} & 0 & \sqrt{3} & 0 & -1 \\
 -\sqrt{2} & 0 & \sqrt{3} & 0 & 1 & 0 \\
 0 & -\sqrt{2} & 0 & \sqrt{3} & 0 & 1 \\
 \sqrt{2} & 0 & 0 & 0 & 2 & 0 \\
 0 & \sqrt{2} & 0 & 0 & 0 & 2 \\
\end{array}
\right).
\end{equation}
Essentially, $V_{\bk}$ provides an orthonormal basis where the first two columns span the nodal plane and the ring now lies in $(\bq_{1},\bq_{2},0,0,0,0)$. At this stage, to leading order in momenta $\bq$, the Hamiltonian near
the origin is given by $\tilde{H}^{\rm (2,1)}_{\bq} = \sum_{ij\ell} d_{ij\ell} \sigma_{ij\ell}$ and
\begin{subequations}
\label{eq:kp_exp}
\begin{align}
    d_{0,3,3}&=m, \\
    d_{1,3,3}&=-\tfrac{{q}_1^2+{q}_2^2}{6}+4, \\
    d_{1,1,0}&=\tfrac{{q}_3+2 {q}_4+\sqrt{3} {q}_5}{2 \sqrt{2}}, \\
    d_{1,2,0}&=\tfrac{{q}_3+{q}_4+\sqrt{3} \left({q}_5+{q}_6\right)}{2 \sqrt{2}}, \\
    d_{1,3,1}&=\tfrac{-2 {q}_3+{q}_4+\sqrt{3} {q}_6}{2 \sqrt{2}}, \\
    d_{1,3,2}&=-\tfrac{{q}_3+ {q}_4}{\sqrt{2}}, \\
    d_{2,3,3}&=\tfrac{3 ({q}_3+ {q}_4) +\sqrt{3} \left({q}_5+{q}_6\right)}{\sqrt{2}}.
\end{align}    
\end{subequations}
Note that the nodal ring lies in the plane spanned by $(q_1,q_2)$ and its form is apparent in $d_{1,3,3}$. The terms \cref{eq:kp_exp}(b--f) form a set of mutually anti-commuting matrices. The eigenvalues in this low-energy approximation are given by 
\begin{align} \label{eq:lowE}
    E^2 & =  \frac{1}{36} \left(q_1^2+q_2^2-24\right)^2+m^2+X^2 + d_{2,3,3}^2 \\ & \pm 2 \sqrt{m^2 \left(  d_{2,3,3}^2+\left(q_1^2+q_2^2-24\right)^2/36\right)+ d_{2,3,3}^2 X^2} \nonumber,
\end{align}
wherein $X^2 = d_{1,1,0}^2 + d_{1,2,0}^2 + d_{1,3,1}^2 + d_{1,3,2}^2$ and each of the four energy bands is doubly degenerate. 

3.~To proceed, it would be convenient to diagonalize the expression $X$, i.e., identify an orthonormal basis for the coordinates $(q_3,q_4,q_5,q_6)$, while leaving $(q_1,q_2)$ unaltered. By obtaining the eigenvectors of the matrix representation of $X$, it can be shown that with
\begin{equation}
    \begin{pmatrix}
        q_3 \\ q_4 \\ q_5 \\ q_6  
    \end{pmatrix}
    \to \frac{1}{4}
    \left(
    \begin{array}{cccc}
     2 \alpha _+ & 2 \beta _+ & 2 \alpha _-
       & 2 \beta _- \\
     -\kappa _+ & \kappa _- & \kappa _- &
       \kappa _+ \\
     \kappa _- & -\kappa _+ & \kappa _+ &
       \kappa _- \\
     2 \alpha _- & -2 \beta _- & -2 \alpha
       _+ & 2 \beta _+ \\
    \end{array}
    \right)
    \begin{pmatrix}
        q_3 \\ q_4 \\ q_5 \\ q_6  
    \end{pmatrix},
\end{equation}
such that $\alpha_{\pm} = \sqrt{1\pm \tfrac{1}{2 \sqrt{19}}}$, $\beta_{\pm} = \sqrt{1\pm \tfrac{7}{2\sqrt{19}}}$ and $\kappa_{\pm} =\sqrt{3}\pm 1$, one has 
\begin{align}
    4X^2= & \left(5+\sqrt{19}\right) q_3^2+\left(3+\sqrt{3}\right)
   q_4^2 \nonumber \\ & +\left(3-\sqrt{3}\right) q_5^2+\left(5-\sqrt{19}\right) q_6^2.
\end{align}

From \cref{eq:lowE}, the energy vanishes when 
\begin{align}
    4 X^2 = -\left(\left(q_1^2 + q_2^2 \pm 24\right)/6 \pm \sqrt{m^2 - d_{2,3,3}^2} \right)^2,
\end{align}
which in turn implies that $X=0$, $q_3=q_4=q_5=q_6=0$ and $d_{2,3,3}=0$. It is then easy to deduce that the nodal line is described by $q_1^2+q_2^2 = 6(4-|m|)$. 


\subsubsection{Mass dependence of the nodal manifold}
\label{app:diff_m}

The nodal manifold of the 2-supercell in its 6D BZ can be further studied as a function of the mass $m$. As seen in \cref{fig:evol}, the manifold has a complicated structure when projected to the first three momentum coordinates. The simple structure of a nodal ring emerges only close to $m=3$. 
Referring to the case of Weyl semimetals, the degeneracy in the spectrum can only be removed either by breaking translation symmetry and allowing the Weyl points to hybridize or by tuning model parameters to bring them together.
Similarly, here the nodal ring shrinks to a point as one tunes the mass towards $m=4$, which marks a phase transition to the trivial atomic limit. 

The topology characterizing the nodal line can be visualized by integrating the second Chern number for fixed $(k_1,k_2)$ (the horizontal coordinates of plots in Fig.~\ref{fig:evol}) along the four-dimensional torus spanned by the remaining momenta $(k_3,k_4,k_5,k_6)$. 
For $m=3$ the computed $C_2$ data exhibits a clear convergence; therefore, we carry the integration for a single grid size $L=15$ and round the result to the nearest integer. 
The outcome of this computation is plotted in the horizontal plane of Fig.~\ref{fig3}(b) of the main text.
In addition, we attempted to perform an analogous computation for $m=2.5$, where the data in Fig.~\ref{fig:evol} suggest eleven distinct nodal-line rings.
The occurrence of multiple nodal lines implies that the typical gap size on the four-dimensional tori is smaller, making the Berry curvature peaks sharper, thus necessitating a finer momentum-space grid. 
We show in \cref{fig:m2.5_phase}(a) the result of the integration for the grid size $L=15$, where large deviations from quantized values are observed.
Therefore, following the discussion in Sec.~\ref{app:flux_PBC}, we perform the computation for a range of grid sizes $L\in\{5,6,\ldots,12,13,15\}$ and then perform a finite-size scaling with the fitting function \begin{equation}
\label{eqn:C2-extrapol}
C_2 = C_{2,\infty} + \frac{a}{L}+\frac{b}{L^2}.
\end{equation}
Unfortunately, the considered range of grid sizes $L$ turns out to be insufficient for extracting well converged results.
Namely, in \cref{fig:m2.5_phase}(b) we plot the extract values of $C_{2,\infty}$ where significant noise is observed. 
Due to computational limitations we did not perform the analysis with larger values of $L$ and for smaller~$\left|m\right|$.

\begin{figure}[t]
\includegraphics[width=\columnwidth]{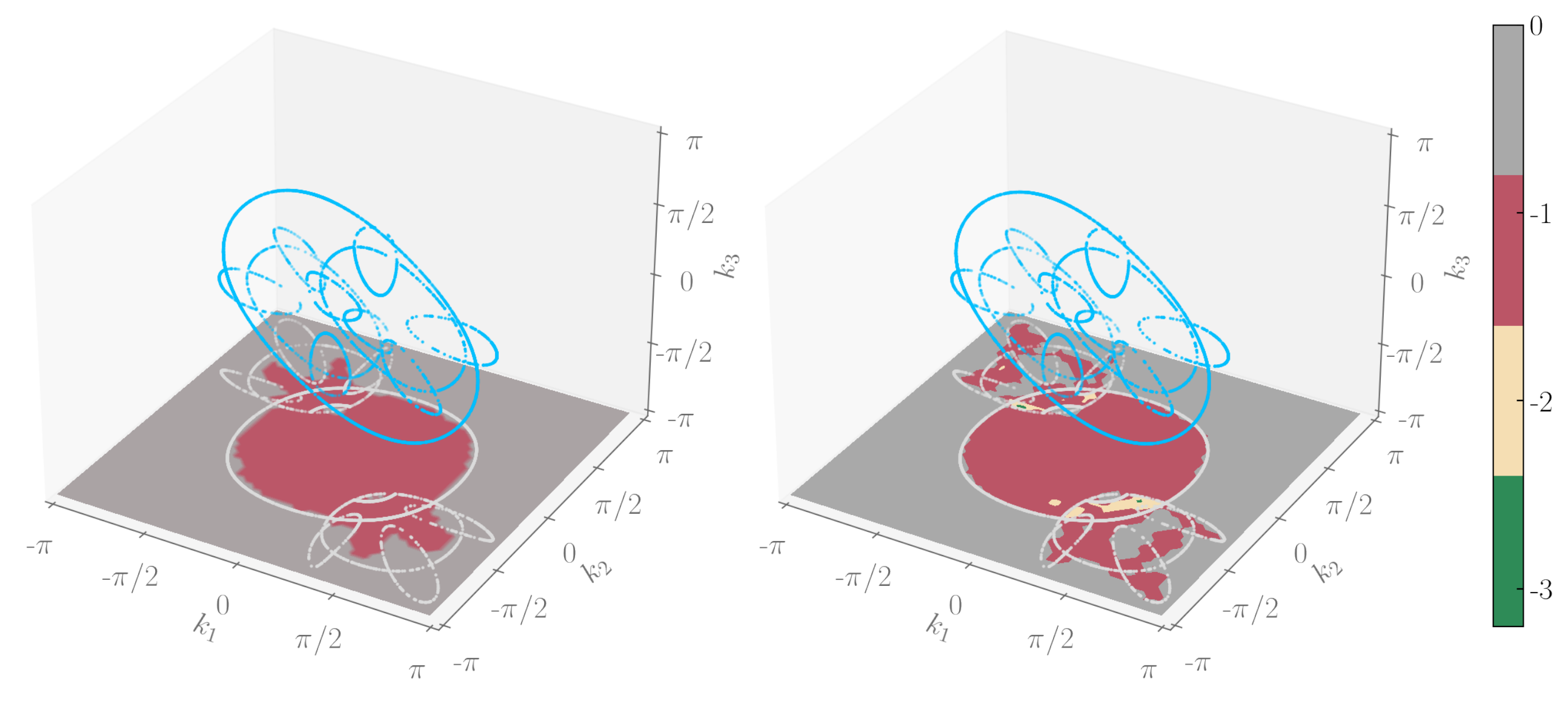}
    \caption{Second Chern number $C_2$ of the $2$-supercell Hamiltonian at mass parameter $m=2.5$, computed by integration inside the four-dimensional tori with fixed momenta $(k_1,k_2)$ for (a)~$L=15$ and (b)~when extrapolated to large $L$ by rounding $C_{2,\infty}$ in the fitted function~(\ref{eqn:C2-extrapol}) to the nearest integer. The data is noisy due to computational limitations.}
    \label{fig:m2.5_phase} 
\end{figure}

We next extract the scaling of the DOS as a function of the energy, with the result displayed in \cref{fig:dos_scaling}.
Assuming a power-law scaling of the DOS close to $E=0$, i.e. $\rho(E\to 0)\propto E^\alpha$, we fit the data using a non-linear least-squares algorithm.
To account for the non-normally distributed errors in the data obtained from random sampling, we perform a weighted least-squares fit with weights given by the inverse square-root of the \emph{observed} DOS values (see \cref{sec:larger-sc} for more details on the fitting procedure).
We find good agreement with the model $\rho(E)\propto E^\alpha$ when fitting in an energy range of width $E_\textrm{max}-E_\textrm{min}=0.1$, starting from the lowest energy $E_\textrm{min}>0$ for which the DOS data does not exhibit any holes (i.e., energy bins with no sampled states).
The extracted scaling exponent $\alpha$ varies with $0<m\leq 4$ within the range~$(3.3,4.1)$. 
For $0<m<3$, a deviation to values somewhat smaller than the theoretically predicted $\alpha=4$ (see \cref{app:symmetry-analysis}) is observed, while it remains closer to $4$ in the range $3\leq m\leq 4$.
Exactly at $m=0$, however, the 2-supercell DOS has a linear scaling, $\alpha=0.976\pm 0.002$. 
On the other hand, for $m>4$, a gap opens in the $2$-supercell spectrum and $\alpha$, as extracted with our fitting procedure, suddenly jumps from roughly $3.8$ at $m=4$ to $30\pm 1$ at $m=4.25$.


\subsubsection{Symmetry protection of the nodal ring}
\label{app:symmetry-analysis}

In this section, we analyze how certain crystalline symmetries of the hyperbolic $\{8,8\}$ lattice impose particular band-structure features observed inside the 6D BZ$^{(2,1)}$ of the $2$-supercell Hamiltonian discussed in Appendix~\ref{sec:k.p}. 
Special attention is given to space inversion ($\mathcal{P}$) and to time reversal ($\mathcal{T}$).
In particular, these symmetries allow us to address the three following aspects: 
(i) identify the 2D plane of the nodal ring, which is enforced by inversion of bands with opposite $\mathcal{P}$-inversion eigenvalues, 
(ii) clarify the two-fold Kramers degeneracy of all bands, 
and (iii) explain the codimension $\mathfrak{d}=5$ for node formation at generic values of $m$ as well as the significantly decreased codimension $\mathfrak{d}=2$ observed for $m=0$.

Let us begin by discussing the role of $\mathcal{PT}$ symmetry in the 4D BZ$^{(1,1)}$ of the hyperbolic Bloch Hamiltonian $H_{\boldsymbol{k}}^{(1,1)}$ constructed on the primitive unit cell, which matches exactly the Bloch Hamiltonian of the 4D QHI on a hypercubic lattice~\cite{zhang2001,Qi2008}.
The topology of the latter Hamiltonian is well understood~\cite{Ryu:2010}.
In particular, it is easily verified that $H_{\boldsymbol{k}}^{(1,1)}$ in \cref{eq:4D-QHI:Bloch-Ham}) commutes with $\mathcal{PT} = \GM{2} \GM{4} \mathcal{K}$:
\begin{equation}
(\mathcal{PT})H_{\boldsymbol{k}}^{(1,1)}(\mathcal{PT})^{-1} = H_{\boldsymbol{k}}^{(1,1)}
\end{equation}
where $\GM{2}$ and $\GM{4}$ are the two imaginary Dirac matrices. 
Since $(\mathcal{PT})^2 = -\mathbbm{1}_4$, it follows that matrices $H_{\boldsymbol{k}}^{(1,1)}$ belong to the symplectic class~\cite{vonNeumann:1929}, which corresponds to nodal class $\mathrm{AII}$ of Ref.~\cite{Bzdusek:2017}. 
Energy bands in this symmetry class exhibit two-fold Kramers degeneracies in the spectrum, while robust band degeneracies occurring at generic $\boldsymbol{k}$-points are of codimension $\mathfrak{d}=5$~\cite{vonNeumann:1929,Bzdusek:2017} and therefore do not occur inside the 4D BZ$^{(1,1)}$. 

On the other hand, robust point nodes may generically occur inside the five-dimensional space spanned by $(\boldsymbol{k},m)$. 
Indeed, as manifested by the phase diagram in \cref{fig1}(b), topological phase transitions at $m\in\{0,\pm 2,\pm4\}$ are facilitated through gap closings, which are pinned by inversion symmetry $\mathcal{P} = \GM{5}$ to high-symmetry momenta with components $\{k_j\}_{j=1}^4\in\{0,\pi\}$:
\begin{equation}
\mathcal{P}H_{\boldsymbol{k}}^{(1,1)}\mathcal{P}^{-1} = H_{-\boldsymbol{k}}^{(1,1)}.\label{eqn:INV-4D}
\end{equation}
Since Dirac matrices $\GM{5}$ have eigenvalues $\pm 1$, each with multiplicity two, it follows from the $m_{\boldsymbol{k}}\GM{5}$ term in the Hamiltonian~(\ref{eq:4D-QHI:Bloch-Ham}) that the consecutive topological phase transitions are driven by inversion of bands of opposite $\mathcal{P}$-eigenvalues at the various high-symmetry $\boldsymbol{k}$-points.

Let us remark that when representing the action of $\mathcal{P}$ on the hyperbolic lattice in the coordinate space, we encounter certain non-uniqueness; namely, there are several distinct points of inversion center that are not related by symmetry of the $\{8,8\}$ lattice.
These inversion points respectively correspond to: vertices/sites (black dots in Fig.~\ref{fig:2-supercell}), mid-points of edges (green triangle in Fig.~\ref{fig:2-supercell}), and centers of faces (orange squares in Fig.~\ref{fig:2-supercell}) of the $\{8,8\}$ lattice. 
Nevertheless, the inversion operations with respect to these distinct points belong to the same conjugacy class in the hyperbolic point group $G^{(1)}:=\Delta^+/\Gamma^{(1)}$, where $\Delta^+ = \Delta^+(2,8,8)$ 
is the triangle group of the $\{8,8\}$ lattice~\cite{Boettcher:2022,Lenggenhager2023} and $\Gamma^{(1)}$ is the translation group associated with the primitive unit cell. 
It follows that all the specified inversion centers transform the hyperbolic momentum in the same way, $\mathcal{P}:\boldsymbol{k}\mapsto -\boldsymbol{k}$, and each can, therefore, be interpreted as the coordinate-space equivalent of the inversion symmetry in Eq.~(\ref{eqn:INV-4D}).

\begin{figure}[t]
\includegraphics[width=\columnwidth]{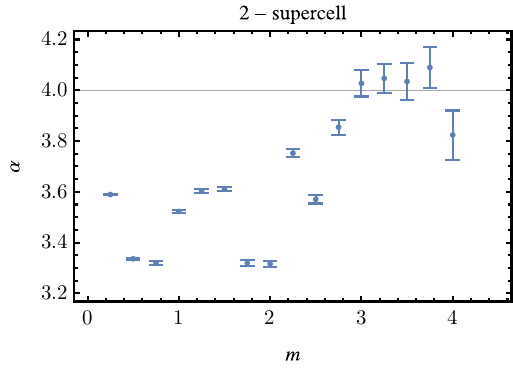}
    \caption{
    Low-energy scaling exponent $\alpha$ as function of the mass parameter $m$ computed from fitting the $2$-supercell density of states with the model $\rho(E\to 0)\propto E^\alpha$. The error bars indicate the standard errors for the estimate of $\alpha$ (see \cref{sec:larger-sc} for details of the fitting procedure).
    The horizontal gray line indicates the scaling of $\alpha=4$ predicted by the theoretical argument discussed in \cref{app:symmetry-analysis}). Data at $m=0$ ($\alpha = 0.976\pm 0.002$) and $m>4$ ($\alpha = 30 \pm 1$) are outside the range shown here, see \cref{fig:dos_scaling_scs}.
    }
    \label{fig:dos_scaling} 
\end{figure}

We next analyze how the notions of $\mathcal{P}$ and $\mathcal{T}$ symmetry, and of their composition $\mathcal{PT}$, generalize to the 6D BZ$^{(2,1)}$ associated with the hyperbolic Bloch Hamiltonian $H_{\boldsymbol{k}}^{(2,1)}$ on the $2$-supercell, specified in Eqs.~(\ref{eq:bloch_h}--\ref{eq:bloch_h-detail}).
Let us remind that a hyperbolic crystalline symmetry $g$ transforms momentum components inside the 6D BZ$^{(2,1)}$ by a linear transformation $\boldsymbol{k}\mapsto \boldsymbol{k'}= M_g \cdot \boldsymbol{k}$, where we call $M_g\in\mathsf{GL}(6,\mathbb{Z})$ the point-group matrix of $g$~\cite{Chen2023b}. 
If $g$ is a symmetry of the studied model, then the hyperbolic Bloch Hamiltonians $H_{\boldsymbol{k}}^{(2,1)}$ and $H_{\boldsymbol{k'}}^{(2,1)}$ are related by a linear (unitary or antiunitary) transformation, and consequently the energy spectra at $\boldsymbol{k}$ and $\boldsymbol{k'}$ are identical.

We first analyze the effect of time-reversal symmetry.
Since $\mathcal{T}$ does not act on spatial coordinates, it does not induce a non-trivial permutation of the boundaries of the $2$-supercell.
However, being an antiunitary operator, $\mathcal{T}$ acts by complex conjugation on the twisted boundary conditions across the $2$-supercell edges. 
The conjugation results in flipping the sign of all momentum components, implying $M_\mathcal{T}=-\mathbbm{1}_6$, and there exists a unitary matrix $U_{\mathcal{T}}$ such that
\begin{equation}
H^{(2,1)}_{M_\mathcal{T}\cdot\boldsymbol{k}} = U_{\mathcal{T}} \cdot \big[H^{(2,1)}_{\boldsymbol{k}}\big]^* \cdot U_{\mathcal{T}}^\dagger 
\end{equation}
with the specific form of $U_{\mathcal{T}} = \sigma_0 \otimes (\GM{1}\GM{3}) \mathcal{K}$.

In contrast to the primitive cell, the discussion of inversion symmetry for the $2$-supercell branches into two cases; this is because the various choices of the inversion center now fall into \emph{two} conjugacy classes in the (enlarged) point group $G^{(2)}:=\Delta/\Gamma^{(2)}$, where $\Gamma^{(2)}$ is the (reduced) translation group of the $2$-supercell.
Specifically, the inversion $\mathcal{P}^\textrm{V}$ with respect to vertices/sites (black dots in Fig.~\ref{fig:2-supercell}) and inversion $\mathcal{P}^\textrm{F}$ with respect to centers of faces (orange square in Fig.~\ref{fig:2-supercell}) of the $\{8,8\}$ lattice are \emph{distinguished} from the inversion $\mathcal{P}^\textrm{E}$ with respect to the mid-points of edges (green triangle in Fig.~\ref{fig:2-supercell}) of the $\{8,8\}$ lattice. 
For this reason, to understand the full implications of inversion symmetry on the spectrum of $H_{\boldsymbol{k}}^{(2,1)}$, one has to analyze the constraints imposed by both $\mathcal{P}^\textrm{V,F}$~and~$\mathcal{P}^\textrm{E}$.

We begin by considering the inversion $\mathcal{P}^\mathrm{V}$ with respect to a vertex (which is equivalent to considering $\mathcal{P}^\textrm{F}$). 
By specifically taking the vertex in the center of the primitive cell, we recognize from the arrangement in Fig.~\ref{fig:2-supercell} that $\mathcal{P}^\textrm{V}$ permutes the subscripts of the translation generators as follows:
\begin{equation}
(1,2,3,4,5,6,\alpha,\beta)\mapsto (5,6,\alpha,\beta,1,2,3,4).    
\end{equation}
For 1D IRs, the relators among the eight generators of $\Gamma^{(2)}$ imply that $k_\alpha = -k_1 - k_3 - k_5$ and $k_\beta = -k_2 - k_4 - k_6$~\cite{Lenggenhager2023}, i.e., they are uniquely specified by the other six momentum components.
Therefore, $\mathcal{P}^\textrm{V}$ transforms momentum components as $\boldsymbol{k} \mapsto  M_{\mathcal{P}^\textrm{V}} \cdot \boldsymbol{k}$ with 
\begin{equation}
M_{\mathcal{P}^\textrm{V}} = \left(\begin{array}{cccccc}
0 & 0 & 0 & 0 & 1 & 0 \\
0 & 0 & 0 & 0 & 0 & 1 \\
-1 & 0 & -1 & 0 & -1 & 0 \\
0 & -1 & 0 & -1 & 0 & -1 \\
1 & 0 & 0 & 0 & 0 & 0 \\
0 & 1 & 0 & 0 & 0 & 0 \\
\end{array}\right).
\end{equation}
We find that $\mathcal{P}^\textrm{V}$ constraints the hyperbolic Bloch Hamiltonian for $2$-supercell as
\begin{equation}
H^{(2,1)}_{M_{\mathcal{P}^\mathrm{V}}\cdot\boldsymbol{k}} = U_{\mathcal{P}^\mathrm{V},\boldsymbol{k}} \cdot H^{(2,1)}_{\boldsymbol{k}} \cdot U_{\mathcal{P}^\mathrm{V},\boldsymbol{k}}^\dagger,
\end{equation}
where
\begin{equation}
U_{\mathcal{P}^\mathrm{V},\boldsymbol{k}} = \left(\begin{array}{cc}
1   &   0   \\
0   &   \mathrm{e}^{-\mathrm{i}(k_2+k_3+k_4+k_5)}
\end{array}\right)\otimes \GM{5}.\label{eqn:UPV}
\end{equation}
We further identify $\mathcal{P}^\mathrm{V}$-invariant momenta by solving for $\boldsymbol{k}\stackrel{!}{=}M_{\mathcal{P}^\textrm{V}}\cdot \boldsymbol{k}$ (mod $2\pi$ in each of the six components), which defines four two-dimensional planes
\begin{eqnarray}
\mathcal{M}_{s_1,s_2} &=& \{(k_1,k_2,-k_1+s_1,-k_2+s_2,k_1,k_2)|\ldots \nonumber \\
&\phantom{=}&\qquad\ldots|
k_{1,2}\in[-\pi,+\pi )\} \subset    
\textrm{BZ}^{(2,1)}
\end{eqnarray}
where $s_{1,2}\in\{0,\pi\}$. 
Within these planes, the exponential in Eq.~(\ref{eqn:UPV}) evaluates to \begin{equation}
\mathrm{e}^{-\mathrm{i}(k_2+k_3+k_4+k_5)}|_{\mathcal{M}_{s_1,s_2}}=\mathrm{e}^{-\mathrm{i}(s_1+s_2)} \in\{\pm 1\},
\end{equation}
and the eigenvalues of $U_{\mathcal{P}^\mathrm{V}}$ within the high-symmetry planes are $\pm 1$, each with multiplicity four. 

If inversion of bands with opposite $\mathcal{P}^\mathrm{V}$ eigenvalue occurs inside $\mathcal{M}_{s_1,s_2}$, then these bands are prevented from hybridization and a $\mathcal{P}^\textrm{V}$-protected nodal line is formed inside the plane $\mathcal{M}_{s_1,s_2}$. 
Crucially, as $|m|$ decreases across the critical value $|m_c|=4$, band inversion occurs inside the 4D BZ of $H_{\boldsymbol{k}}^{(1,1)}$ at  $\boldsymbol{k}^\textrm{(4D)}=(0,0,0,0)$ (for $m_c = -4$) resp.~at $\boldsymbol{k}^\textrm{(4D)}=(\pi,\pi,\pi,\pi)$ (for $m_c = +4$).
Owing to the zone-folding of immersion $\iota$ in Eq.~(\ref{eqn:immerse-4D-in-6D}), both four-momenta are mapped to the \emph{same} six-momentum $\boldsymbol{k}=\boldsymbol{0}$ inside the plane $\mathcal{M}_{0,0}$.
Therefore, we conclude that the topological transitions at $|m_c|=4$ are associated with the formation of a nodal line inside the plane $\mathcal{M}_{0,0}$. 
This prediction, rooted in symmetry analysis, agrees with the analytical calculation presented in Appendix~\ref{sec:k.p}.

Next, we consider implications of the inversion $\mathcal{P}^\textrm{E}$ with respect to the mid-point of an edge of the $\{8,8\}$ lattice (green triangle in Fig.~\ref{fig:2-supercell}).
To understand how $\mathcal{P}^\textrm{E}$ acts on the momenta inside BZ$^{(2,1)}$, we use GAP~\cite{GAP4,LINS} to express conjugations $\mathcal{P}^\textrm{E} \tilde{\gamma}_j \big(\mathcal{P}^\textrm{E}\big)^{-1}$ of the generators $\{\tilde{\gamma}_j\}_{j=1}^6$ of the translation group $\Gamma^{(2)}$ as composite words in $\{\tilde{\gamma}_j\}_{j=1}^6$~\cite{Chen2023b}. We obtain
\begin{subequations}
\begin{eqnarray}
\tilde{\gamma}_1 &\mapsto& \tilde{\gamma}_1^{-1}, \\
\tilde{\gamma}_2 &\mapsto& \tilde{\gamma}_2^{-1}, \\
\tilde{\gamma}_3 &\mapsto& \tilde{\gamma}_2\tilde{\gamma}_3^{-1}\tilde{\gamma}_2^{-1}, \\
\tilde{\gamma}_4 &\mapsto& \tilde{\gamma}_2\tilde{\gamma}_3\tilde{\gamma}_4^{-1}\tilde{\gamma}_3^{-1}\tilde{\gamma}_2^{-1}, \\
\tilde{\gamma}_5 &\mapsto& \tilde{\gamma}_2\tilde{\gamma}_3\tilde{\gamma}_4\tilde{\gamma}_5^{-1}\tilde{\gamma}_4^{-1}\tilde{\gamma}_3^{-1}\tilde{\gamma}_2^{-1}, \\
\tilde{\gamma}_6 &\mapsto& \tilde{\gamma}_2\tilde{\gamma}_3\tilde{\gamma}_4\tilde{\gamma}_3^{-1}\tilde{\gamma}_1^{-1}\tilde{\gamma}_6^{-1}\tilde{\gamma}_4^{-1}\tilde{\gamma}_2^{-1}\tilde{\gamma}_1,
\end{eqnarray}
\end{subequations}
which at the level of Abelian representations $\rho(\tilde{\gamma}_j)=\mathrm{e}^{\mathrm{i}k_j}$ simplifies to a diagonal point-group matrix $M_{\mathcal{P}^\textrm{E}}=-\mathbbm{1}_6$. 
Since $\mathcal{P}^\textrm{E}$ exchanges the two sites within the $2$-supercell, we find that 
\begin{equation}
H^{(2,1)}_{M_{\mathcal{P}^\mathrm{E}}\cdot\boldsymbol{k}} = U_{\mathcal{P}^\mathrm{E}} \cdot H^{(2,1)}_{\boldsymbol{k}} \cdot U_{\mathcal{P}^\mathrm{E}}^\dagger,
\end{equation}
where
\begin{equation}
U_{\mathcal{P}^\mathrm{E}} = \sigma_{1} \otimes \GM{5}.\label{eqn:UPE}
\end{equation}
Importantly, note that the composition $\mathcal{P}^\textrm{E}\mathcal{T}$ acts trivially on BZ$^{(2,1)}$: 
\begin{equation}
M_{\mathcal{P}^\textrm{E}\mathcal{T}}=M_{\mathcal{P}^\textrm{E}}\cdot M_{\mathcal{T}}=\mathbbm{1}_6,
\end{equation}
while inside the Hilbert space of Bloch states at $\boldsymbol{k}$ it acts by an antiunitary transformation:
\begin{equation}
U_{\mathcal{P}^\textrm{E}\mathcal{T}}=U_{\mathcal{P}^\textrm{E}}\cdot U_{\mathcal{T}}=\sigma_{1} \otimes (\GM{1}\GM{3}) \mathcal{K}.
\end{equation}
Since $(U_{\mathcal{P}^\textrm{E}\mathcal{T}})^2=-\mathbbm{1}_6$, we conclude that Hamiltonian matrices $H_{\boldsymbol{k}}^{(2,1)}$ belong to the symplectic class (nodal class AII of Ref.~\cite{Bzdusek:2017}) whose spectrum exhibits Kramers degeneracy and the band nodes are generically of codimension $\mathfrak{d}=5$. 
This is consistent with the observed formation of $(6-5)=1$-dimensional nodal lines inside BZ$^{(2,1)}$, as illustrated with numerically extracted data in Fig.~\ref{fig:evol}. 

As discussed in the \emph{Nodal ring} section of the main text, if one assumes a linear dispersion of the energy bands near the band node in all directions perpendicular to the node, then the expected DOS scaling is $\rho(E)\propto E^{\mathfrak{d}-1}$, i.e., we anticipate $\rho(E)\propto E^{4}$ for the hyperbolic non-Abelian semimetal.
While this theoretical prediction appropriately describes the scaling of numerically obtained DOS data for $m \in (3,4) $, where the $2$-supercell description identifies a single nodal ring, we observe in Fig.~\ref{fig:dos_scaling} a noticeable deviation to somewhat lower values of the DOS exponent for $m\in(0,3)$, where the nodal manifold acquires a complicated structure. 
This observation suggests that the assumption on linear band dispersion fails in some directions near some locations on the nodal manifold: either for symmetry reasons (e.g, nodal line located along a high-symmetry line, where symmetry forbids linear terms), or due to intersections of multiple nodal rings (i.e., nodal chain), or accidentally (which may be expected due to the large extent of the nodal manifold). 

We finally tackle the particular case of $m=0$, where Fig.~\ref{fig:evol} suggests a higher-dimensional nodal manifold, and Fig.~\ref{fig:dos_scaling} indicates an altered scaling of the density of states. 
For this choice of the mass parameter, an additional sublattice symmetry $\mathcal{S}$ arises, which leaves the six-dimensional momentum $\boldsymbol{k}$ invariant ($M_{\mathcal{S}}=\mathbbm{1}_6$). 
It acts on the Hilbert space by a unitary matrix: 
\begin{equation}
U_{\mathcal{S}}=\sigma_{3} \otimes\mathbbm{1}_4,
\end{equation}
and that flips the sign of energy:
\begin{equation}
U_{\mathcal{S}} \cdot H_{\boldsymbol{k}}^{(2,1)} \cdot U_{\mathcal{S}}^{-1} = -H_{\boldsymbol{k}}^{(2,1)}.  
\end{equation}
Since $\mathcal{S}$ anticommutes with $\mathcal{P}^\mathrm{E}\mathcal{T}$ (i.e., the ``sublattice parity is odd'' per the terminology of Ref.~\cite{Bzdusek:2017}), the Hamiltonian $H_{\boldsymbol{k}}^{(2,1)}$ belongs to nodal class $\textrm{DIII}$ at $m=0$. 
The sublattice symmetry thereby reduces the codimension for node formation to $\mathfrak{d}=2$~\cite{Bzdusek:2017}, which implies that the nodal manifold becomes four-dimensional in BZ$^{(2,1)}$ (so that it completely fills the three-dimensional projection in the $m=0$ panel of Fig.~\ref{fig:evol}).
Assuming again the linear dispersion in the two direction perpendicular to the nodal manifold, the expect DOS scaling is $\rho(E) = \int \mathrm{d}^2k_\perp \delta(E - v |k_\perp|) \propto E$.  
This theoretical prediction is consistent with the observed drop in the numerically extracted exponent visible in Fig.~\ref{fig:dos_scaling}.

\subsection{Density of states for larger $n$-supercells}\label{sec:larger-sc}

\subsubsection{Density of states from random sampling}\label{sec:DOS-from-sampling}

\begin{figure*}[t]
\includegraphics{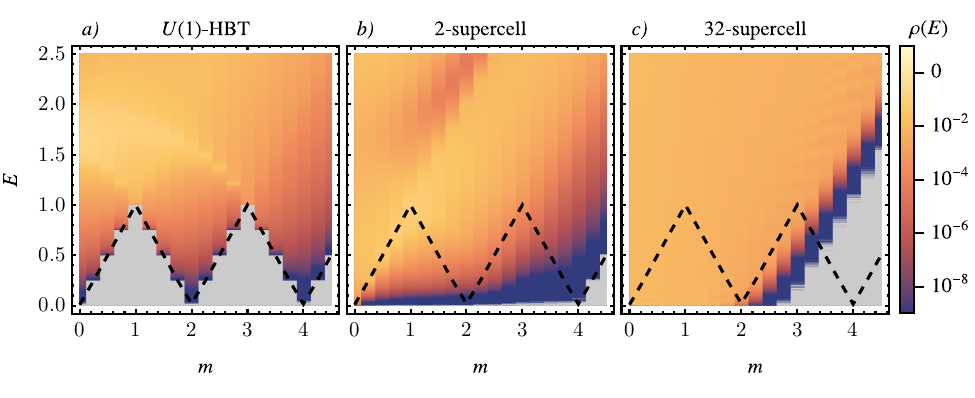}
    \caption{
    Density of Abelian states $\rho(E;m)$ obtained from (a) the primitive cell, corresponding to $\textrm{U}(1)$ hyberbolic band theory (HBT), (b) the $2$-supercell, and (c) the $32$-supercell as a function of energy $0\leq E\leq 2.5$ and mass parameter $0\leq m\leq 4.5$ from random sampling with (a,c) $10^9$ and (b) $10^{10}$ Abelian momenta.
    The density is shown on a logarithmic scale (legend on the right)
    Zero values, indicating a true gap in the spectrum for that particular (super-)cell, are shown in gray.
    The black dashed line indicates the exact gap in the density of $\textrm{U}(1)$ states.
    }
    \label{fig:dos_scs_vs_m}
\end{figure*}

We systematically study the low-energy DOS as a function of supercell size $n$ and mass parameter $m$ to gain some understanding of the possible phase diagrams arising from the inclusion of higher-dimensional irreducible representations and thus at larger system sizes.
To that end, for each of the $n$-supercells defined in \cref{eqn:supercell-PGs}, we randomly sampled $10^9$ ($10^{10}$ for $n=2$) $\bk$-points in their $(2n+2)$-dimensional $\textrm{BZ}^{(n,1)}$ of $\mathrm{U}(1)$ representations, resulting in $N=4n\times 10^9$ ($8\times 10^{10}$ for $n=2$) eigenenergies.
We then count the collected eigenenergies, producing a histogram of counts $N_i$ associated with bins of width $\Delta E = 0.001$ centered at energies in the range $E_i\in[-8.0005,8.0005]$.

Ignoring potential correlations from states at the same Abelian momentum, we assume that the $N$ states characterized by energy $E$ are independent samples from the probability distribution $\rho(E)$, which is precisely the density of states.
Thus, the probability of finding a randomly sampled state in the bin centered at $E_i$ is
\begin{equation}
    p_i = \int_{E_i-\Delta E/2}^{E_i+\Delta E/2}\dd{E}\rho(E) \approx \rho(E_i)\Delta E.
\end{equation}
The probability of observing $N_i$ states in the bin centered at $E_i$ when sampling $N$, thus, is given by the binomial distribution $\mathcal{B}(N,p_i)$:
\begin{equation}
    \mathrm{P}(N_i)=\binom{N}{N_i}p_i^{N_i}(1-p_i)^{N-N_i},
\end{equation}
with mean $Np_i=N\rho(E_i)\Delta E$ and variance $\sigma^2(N_i)=Np_i(1-p_i)$.
This allows us to estimate $p_i$ via the estimator $\hat{p}_i = N_i/N$ and thus $y_i=p_i/\Delta E\approx\rho(E_i)$ via the estimator
\begin{equation}
    \hat{y}_i = \frac{\hat{p}_i}{\Delta E} = \frac{N_i}{N\Delta E}.
\end{equation}
The associated standard deviation is
\begin{equation}
    \begin{split}
        \sigma(y_i) &= \frac{\sigma(N_i)}{N\Delta E} = \frac{\sqrt{Np_i(1-p_i)}}{N\Delta E}\\
        &= \sqrt{\frac{y_i(1/\Delta E-y_i)}{N}},
    \end{split}
\end{equation}
which can be estimated by
\begin{equation}
    \hat{\sigma}(y_i) = \sqrt{\frac{\hat{y}_i(1/\Delta E-\hat{y}_i)}{N}}.
    \label{eq:dos-data_variance}
\end{equation}
The resulting data $\hat{y}_i$ representing $\rho(E)$ is shown in \cref{fig:dos_scs_vs_m} in the energy range $0\leq E\leq 2.5$ as a function of the mass parameter $0\leq m\leq 4.5$ for ({a}) $\textrm{U}(1)$-HBT, ({b}) the $2$-supercell, and ({c}) the $32$-supercell.

For $\textrm{U}(1)$-HBT [\cref{fig:dos_scs_vs_m}(a)] we find perfect agreement with the expected semimetallic gap closing at $m=0,2,4$ and with insulators at the intermediate mass parameters with maximal gaps at $m=1,3$.
\Cref{fig:dos_scs_vs_m}(b) shows that there are non-Abelian in-gap states across the full range $0<m<4$ with a semimetallic (i.e., power-law-scaling) density of states.
In addition, we observe a quantitative depletion of DOS in the immediate neighborhood of $E=0$ with increasing $m>2$ (which could be attributable to shrinking of the nodal manifold, cf.~Fig.~\ref{fig:evol}) and a gap opening after $m=4$ (which corresponds to disappearance of the nodal-line ring), with a band edge that follows the Abelian band edge for $m>4$.
Note that despite the quantitative depletion, we have seen in \cref{fig:dos_scaling} that the \emph{scaling} of the DOS with $E$ remains approximately constant, with exponent $\alpha\in(3.3,4.1)$.

Going to larger supercells and finally to the $32$ supercell, for which the data is shown in \cref{fig:dos_scs_vs_m}(c), we observe a qualitatively different behavior at different values of the mass parameter~$m$.
First, in the range $0<m\lesssim 2$, the Abelian gap is clearly filled up with more and more non-Abelian states seemingly converging towards a metallic DOS in the thermodynamic limit.
On the other hand, the region $3\lesssim m\leq 4$ is depleted of further states (as is the trivially insulating region $m>4$), such that we expect an insulator in the thermodynamic limit.
The transition between the two phases most probably lies in the range $2<m<3$. 
More detailed discussion of our findings appears in \cref{sec:larger-sc}.

Before concluding, two remarks are in order.
First, the non-Abelian states arising from finite-dimensional irreducible representations, in particular those shown in \cref{fig:dos_scs_vs_m}(b), are still present in the thermodynamic limit, even if they occur with a vanishingly small weight.
Second, the density of Abelian states of any of the studied supercells ($2\leq n\leq 32$) in the range $0\leq m\leq 4$ is, strictly speaking, always \emph{semimetallic} with 
\begin{itemize}
    \item[(\emph{i})] a narrow dip at $E=0$ due to inherent numerical artifacts of the supercell technique (see Sec.~II.A.3 of Supplemental Material to Ref.~\cite{Lenggenhager2023}) in the metallic regime,~and 
    \item[(\emph{ii})] with heavily suppressed tails in the insulating~regime.
\end{itemize}
Therefore, special care is necessary to distinguish insulating resp.~metallic spectra from truly semimetallic ones.

\subsubsection{Fitting the low-energy density of states}\label{sec:alpha-fitting}

For a more quantitative characterization of the DOS arising from the different supercells, we repeat the fitting procedure applied in the earlier \cref{app:diff_m} to the $2$-supercell, assuming the model $\rho(E\to 0)\propto E^\alpha$.
Because we are not directly fitting the DOS $\rho(E)$ but a \emph{histogram}, i.e., $\rho(E)$ integrated over disjoint energy ranges, we have to be careful in the range $0<m\lesssim 3$, where the finite bin-width might have smeared out the anticipated DOS dip [cf.~the remark~(\emph{i}) above], such that the histogram nevertheless indicates a finite $\rho(E=0)$.
To account for these subtleties, we apply a multistep fitting strategy as outline below.

In the first step, we determine whether the DOS \emph{data} (for the given bin width) is metallic or semimetallic.
To that end, we consider the \emph{cumulative} density of states (with positive energy),
\begin{equation}
    F_\rho(E) = \int_0^{E}\dd{E'}\rho(E'),
\end{equation}
which can be obtained from the observed histogram as follows (assuming $i=0$ corresponds to the bin centered at $E=0$):
\begin{equation}
    z_i = F_\rho(E_i+\Delta E/2) = \Delta E\left(\frac{y_0}{2} + \sum_{0<j\leq i}y_j\right),
\end{equation}
where we made use of the spectral symmetry $\rho(-E)=\rho(E)$ to split the bin at zero energy.
Going to the cumulative density of states has two main advantages:
First, it averages over the oscillations that are typically observed in the Abelian DOS of finite supercells~\cite{Lenggenhager2023} and second, it removes the bias of the binning, since the $z_i$ are actually evaluations of $F_\rho$ at certain energies, unlike $y_i$ which only correspond to $\rho(E_i)$ in the limit $\Delta E\to 0$.

Next, we perform a linear least-square fit of the cumulative density of states as a function of energy in a log-log-scale, i.e., we set up the following linear model
\begin{equation}
    \log(z_i) = (\alpha+1) \log(E_i+\Delta E/2) + \beta
\end{equation}
obtained from integrating the model $\rho(E\to 0)\propto E^\alpha$.
We constrain the fit to a very small energy window $[E_\text{min},E_\text{max}]$ near $E=0$ chosen such that $F_\rho(E_\text{min})\geq 10^{-9}$ and $E_\text{max}=E_\text{min}+15\Delta E$.
A linearly increasing cumulative DOS indicates a metallic behavior, $\alpha=0$, allowing us to detect those cases; here we set a threshold $\alpha<0.5$.
For the cases that are deemed to be metallic, the extracted values $\alpha$ are shown in \cref{fig:dos_scaling_scs} as orange data points.
Note that parameter errors obtained from the least-square fit have to be interpreted with care.
Since the fitting is done in a log-log-scale and furthermore even the errors in the original data are not normally distributed (which we partially take into account for the second step as discussed below), the statistical assumptions that would allow us to interpret the parameter errors as confidence intervals are not satisfied.
Nevertheless, the errors do give a rough (relative) idea of how close the scaling is to zero and therefore we include error bars for the orange data points in \cref{fig:dos_scaling_scs}.

In the second step of the procedure, we perform a non-linear least-square fit of the DOS data $\hat{y}_i$.
Both the model and the energy window $[E_\text{min},E_\text{max}]$ depend on the previous estimate of $\alpha$.
The lower bound of the energy window is now determined by the condition $\rho(E\geq E_\text{min})>5\times 10^{-8}$, while the width $E_\text{max}-E_\text{min}$ is $15\Delta E$ for $\alpha\leq 1.2$ (with the exception of $m=0$, where it is always $10\Delta$) and $100\Delta E$ for $\alpha>1.2$.
The motivation for these different choices of the fitting range is the significantly larger deviation from a power law behavior in the metallic regime $m\lesssim 3$ due to subleading corrections.
In particular, we explicitly include subleading corrections to the constant term in the fit model for the metallic regime, cf.~\cref{eq:dos-fit_metallic}, despite the significantly smaller energy range.

\begin{figure*}[p]
\includegraphics{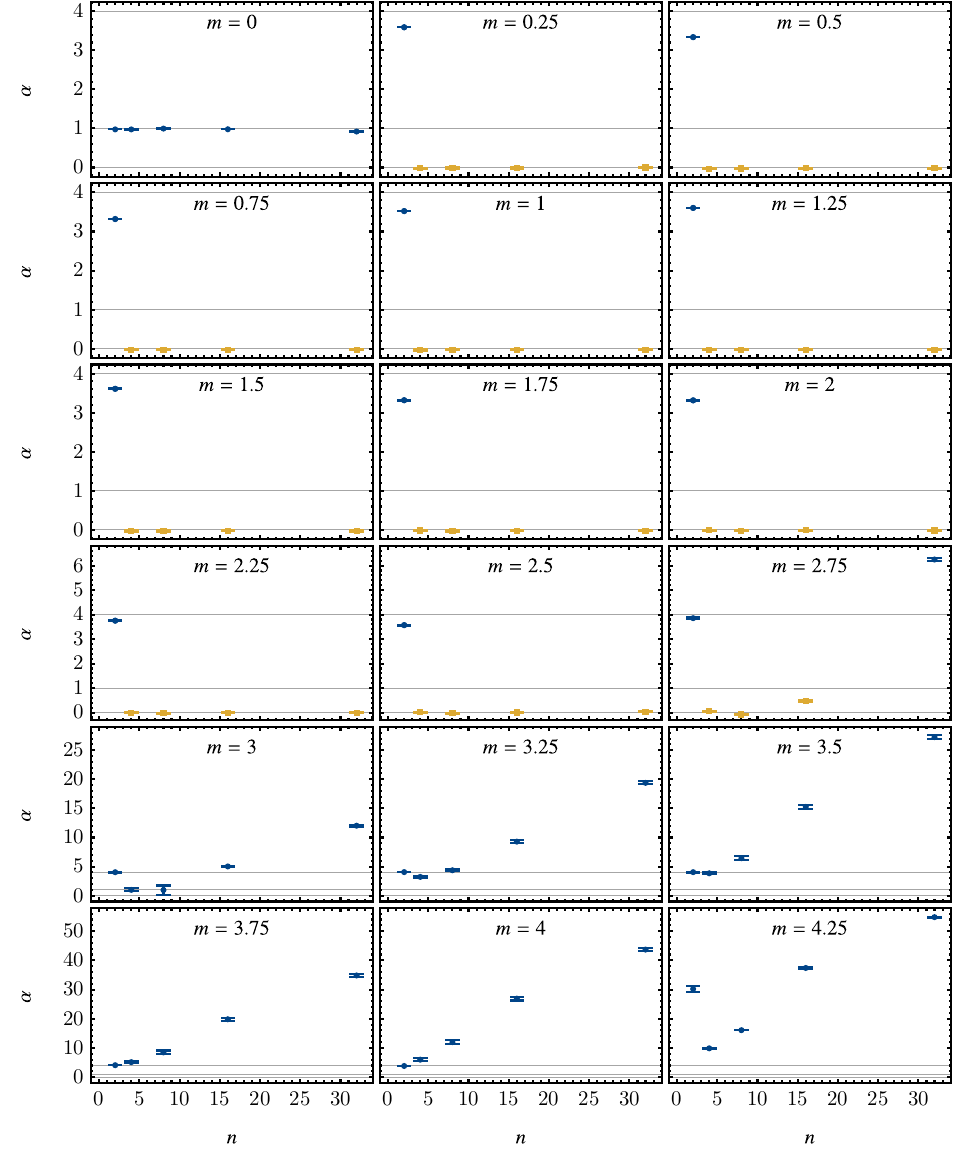}
    \caption{
    Low-energy scaling exponent $\alpha$ as function of supercell size $n$ for different values of the mass parameter $m$, obtained using the fitting procedure described in \cref{sec:alpha-fitting}.
    The blue data points have been obtained from a weighted non-linear least-square fit of the $n$-supercell density of states with model $\rho(E\to 0)\propto E^\alpha$ and the error bars indicate the standard errors but should be interpreted with care.
    The orange data points have been obtained from a linear least-square fit of the cumulative density of states with model $\rho(E\to 0)\propto E^{\alpha+1}$ on a log-log-scale.
    The horizontal gray lines indicates the following values: $0$, $1$, and $4$.
    }
    \label{fig:dos_scaling_scs}
\end{figure*}

\begin{figure*}[t]
\includegraphics{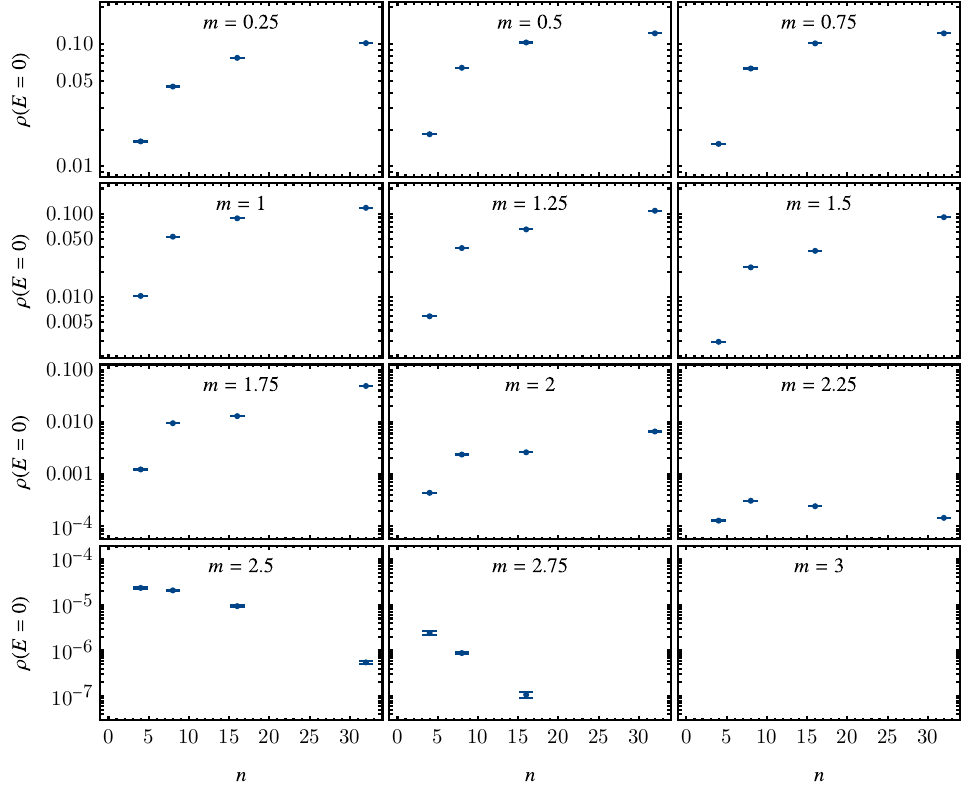}
    \caption{
    Density of states at zero energy $\rho(E=0)$ as function of supercell size $n$ for different values of the mass parameter $m$.
    The data have been obtained from a weighted non-linear least-square fit of the $n$-supercell density of states with model $\rho(E\to 0)=\rho_0 + \rho_1E^{\alpha_1}$ and the error bars indicate the standard errors but should be interpreted with care.
    Note that only data for $(n,m)$ with a metallic density of states are shown.
    In particular, $n=32$ for $m=2.75$ and any $n$ for $m=3$ are semimetallic such that no data points are shown.
    }
    \label{fig:dos_offset_scs}
\end{figure*}

To compensate for the variation of the variance of $y_i$ with $i$, a weighted fitting procedure is used with weights proportional to the inverse of the standard deviation:
\begin{equation}
    w_i \propto \frac{1}{\hat{\sigma}(y_i)} \approx \sqrt{\frac{N\Delta E}{\hat{y}_i}}\propto \frac{1}{\sqrt{\hat{y}_i}},
\end{equation}
where we used \cref{eq:dos-data_variance} and that typically $\hat{y}\Delta E\ll 1$.
Note that this does not compensate for the non-normality of the binomial distribution according to which the $y_i$ are distributed.
While the binomial distribution can be approximated by a normal distribution with same mean and variance for large $N$, such an approximation is typically not very good if $p_i$ is close to $0$ or $1$.
Thus, while the parameter errors are not unreasonable, they have to be interpreted with care.

The non-linear models used in the fitting are as follows.
In the metallic regime, we assume
\begin{equation}
    \hat{y}_i = \rho_0 + \rho_1 E^{\alpha_1},
    \label{eq:dos-fit_metallic}
\end{equation}
while in the semimetallic/insulating regime, we assume
\begin{equation}
    \hat{y}_i = \rho_1 E^\alpha.
\end{equation}
Note that $\alpha_1$ is \emph{not} the leading order scaling but we force a term $\rho_0$ due to our previous estimate of $\alpha$ resulting in a value close to zero, see orange data points in \cref{fig:dos_scaling_scs}.
Thus, we obtain estimates of $\alpha$ for the semimetallic/insulating regime as well; these are shown as blue data points in \cref{fig:dos_scaling_scs}.
Finally, in the metallic regime $0<m<3$, we extract estimates of $\rho_0$, corresponding to $\rho(E=0)$, which are shown in \cref{fig:dos_offset_scs}. 
These can be interpreted as a measure of the `Fermi surface extent'.

\subsubsection{Metal-to-insulator transition for large $n$-supercells}

Let us finally discuss how the trends observed in Fig.~\ref{fig:dos_scs_vs_m}(c) and described in Sec.~\ref{sec:DOS-from-sampling} are correlated with the in-depth analysis of the DOS scaling shown in \cref{fig:dos_scaling_scs,fig:dos_offset_scs}. 

For $0 < m \lesssim 2$ our fitting procedure indicates a metallic DOS (yellow data in \cref{fig:dos_scaling_scs}) for all supercells with $n>2$, with $\rho(E=0)$ (\cref{fig:dos_offset_scs}) that monotonously grows with the supercell size $n$. Therefore, we anticipate the model to be in the metallic phase in the thermodynamic limit for this range of the mass parameter. Furthermore, observe that $\rho(E=0)$ for the 32-supercell peaks for $m\in(0.5, 0.75)$, and shrinks to a smaller value at $m=0.25$. 
This could be interpreted as a shrinking of the Fermi surface extent, vanishing altogether at $m=0$ where the DOS appears semimetallic for all studied $n$-supercells.

The points $2<m\lesssim 3$ appear to be close to the metal-to-insulator transition. 
Specifically, for $m=2.25$ the zero-energy DOS $\rho(E=0)$ seems to converge to a small value below $2\times 10^{-4}$ (potentially to zero). 
For $m=2.5$, although the fits for all supercells with $n>2$ appear metallic in \cref{fig:dos_scaling_scs}, a look at \cref{fig:dos_offset_scs} indicates a rapidly shrinking $\rho(E=0)$. 
Therefore, we expect that the point $m=2.5$ is already on the insulating side of the phase boundary. 
For $m=2.75$, the DOS appears metallic for the intermediate values of the supercell size $n\in\{4,8,16\}$, but ultimately becomes semimetallic with large $\alpha \approx 6.3$ for the $32$-supercell, suggesting an insulating phase in the thermodynamic limit. 

The proximity to the critical point still leaves a mark on the extracted $\alpha(n)$ for $m=3$.
Namely, while the $2$-supercell suggests semimetallic $\alpha \approx 4$, the $4$-supercell and the $8$-supercell are characterized by a significantly smaller DOS scaling exponent $\alpha \approx 1$, reminiscent of the onset of the metallic phase at lower values of $m$. 
This trend at intermediate values of $n$ is overturned for supercells with $n\in\{16,32\}$. 
A similar but much less pronounced dip in $\alpha$ is also observed for the $4$-supercell at $m=3.25$.
Finally, for $m \gtrsim 3.25$, the fitted scaling exponent $\alpha$ grows dramatically with $n$, providing a clear sign of an insulating phase in the thermodynamic limit.

\end{document}